\newtheorem{assumption}{Assumption}[section]
\newtheorem{theorem}{Theorem}
\newtheorem{corollary}[assumption]{Corollary}
\newtheorem{lemma}[assumption]{Lemma}
\newtheorem{definition}{Definition}
\begin{document}
\begin{center}
	{\Large\textbf{Special Functions for Heat Kernel Expansion}}
	\vspace{0.5cm}
	
	{\large A.~V.~Ivanov$^\dag$ and N.~V.~Kharuk$^\ddag$}
	
	\vspace{0.5cm}
	
	$^\dag$$^\ddag${\it St. Petersburg Department of Steklov Mathematical Institute of
		Russian Academy of Sciences,}\\{\it 27 Fontanka, St. Petersburg 191023, Russia}\\
	$^\dag${\it Leonhard Euler International Mathematical Institute, 10 Pesochnaya nab.,}\\
	{\it St. Petersburg 197022, Russia}\\
	{\it E-mail: regul1@mail.ru}\\
	$^\ddag${\it ITMO University, St.Petersburg 197101, Russia}\\
	{\it E-mail: natakharuk@mail.ru}	
\end{center}
\vskip 10mm
\date{\vskip 20mm}
\vskip 10mm
\begin{abstract}

In this paper, we study an asymptotic expansion of the heat kernel for a Laplace operator on a smooth Riemannian manifold without a boundary at enough small values of the proper time. The Seeley--DeWitt coefficients of this decomposition satisfy a set of recurrence relations, which we use to construct two function families of a special kind. 
Using these functions, we study the expansion of a local heat kernel for the inverse Laplace operator.
We show that the new functions have some important properties. For example, we can consider the Laplace operator on the function set as a shift one. 
Also we describe various applications useful in theoretical physics and,
in particular, we find a decomposition of Green's functions near the diagonal in terms of new functions.

\end{abstract}

\tableofcontents  
\newpage

\section{Introduction}	
The heat kernel method first appeared in Ref. \cite{1} and  since then plays an important role in theoretical and mathematical physics. The range of applications of this approach is very wide and currently includes thousands of works. For numerous examples, we refer the reader to the works \cite{108,110,102,109,2,3}, which detail history of the issue and various applications.

As is known, a closed formula for a solution of a heat equation in general case does not exist due to technical difficulties. In this connection, as a rule, the main work with the heat kernel is focused on studying its asymptotic expansion both near zero, see Refs. \cite{10,8,7,9}, and at infinity, see Refs. \cite{4,5,6}. As an example, we can recall finding the asymptotics of an operator trace in the case of non-zero mass, see Refs. \cite{5,10,102}.

On manifolds without  boundary, a natural ansatz for finding the asymptotic expansion at small values of the proper time is the series, see formula (\ref{K_0}), in which the functions $a_n(x,y)$   are called Seeley--DeWitt (or Hadamard, Minakshisundaram \cite{111}, and Gilkey \cite{8}) coefficients, see \cite{110,10}. The point is that the Seeley--DeWitt coefficients obey the system of recurrence differential equations \cite{10}. This property is very remarkable and is used to find the trace parts of the coefficients \cite{11}, as well as in different proofs, for example, in the Atiyah--Patodi--Singer theorem \cite{12}.

In our work we use the recurrence relations in a slightly different context. It turns out that, based on the Seeley--DeWitt coefficients, one can define a family of functions of a special form, and due to the property described above the new functions are transformed into each other by the action of the Laplace operator. Thus we obtain a set of chains (see (\ref{cep1}), (\ref{cep}), (\ref{cep2}), and (\ref{se1})), which are closely related to the construction of asymptotic decompositions and various other applications.

The main purpose of our work is to study the connection between the following four objects: a local part of the heat kernel, the Seeley--DeWitt coefficients, the family of functions, and a local fundamental solution of the Laplace operator. We show the relation between asymptotic expansion of the local part of the heat kernel and a Green's function in terms of the new family of functions.

We believe that our work will be useful in theoretical and mathematical physics. Examples include loop calculations \cite{16}, a recently proposed approach to studying the fermion number \cite{17,18}, as well as application in anomaly theory, in which the same integrals arise as in Section \ref{apl}. The field of use may also include a study of integer powers of the Laplace operator \cite{107}.

Let us shortly describe the content of the paper. In Section \ref{sec:pro} we formulate the background, give basic information about the heat kernel method, and briefly introduce the Hankel transform within the framework of the work with an exponential operator, see (\ref{g19}) and (\ref{gl9}). 

Then, in Section \ref{sec:spec}, we define a family of $\Psi$-functions, the main building blocks of which are Seeley--DeWitt coefficients, and prove the key Lemma \ref{th}. It turns out that the functions can be connected not only by the action of the Laplace operator, but also by a partial derivative with respect to a parameter. We also consider two special cases of $\Psi$-functions for integer and half-integer  index values. 

In Section \ref{sec:odd} we formulate and prove Theorem \ref{th2} on the relation between the asymptotic expansion of the local part of the heat kernel and a special family of fundamental solutions for the odd-dimensional case. In other words, we found the Hankel transform of the local part of the heat kernel in terms of well-studied Seeley--DeWitt coefficients. We also provide some examples of limit cases, when a mass parameter goes to zero. We discuss degeneracy and special cases.

Section \ref{sec:even} is devoted to the similar study for the even-dimensional case and contains several parts. First, in Section \ref{sec:even:rep} we give a derivation of the asymptotic expansion of a fundamental solution of the Laplace operator near the diagonal in terms of $\Psi$-functions. Then, we introduce an additional family of $\Phi$-functions in terms of which we are going the represent the Hankel transform of the local part of the heat kernel. Further, in  Section \ref{sec:even:res}, we study the Hankel transform and prove Theorem \ref{th3}, Lemma \ref{th4}, and Corollary \ref{th5}. The presence of several propositions is a consequence of the fact that the heat kernel can be represented as two parts, each of which solves the heat equation. In Section \ref{ex2} we discuss limits and special cases. 

In Section \ref{apl} we consider different applications, such as Green's function representation, cutoff regularization, and an integral calculation. Conclusion section consists of some comments and remarks.

\section{Problem statement}
\label{sec:pro}
\subsection{Basic concepts of heat kernel method}
Let $M$ be a $d$-dimensional compact Riemannian manifold without a boundary. Points of the manifold we notate by letters $x, y$, and $z$. Then, let us consider an open convex set $U\subset M$, so we suppose that $x,y,z\in U$. This means that all further calculations are performed locally in $U$. The smooth metric tensor is equal to $g^{\mu\nu}(x)$ locally, where $\mu,\nu\in\{1,\ldots,d\}$. Moreover, let $V$ be a Hermitian vector bundle over $M$, so by $B_\mu(x)$ we notate smooth components of a Yang--Mills connection 1-form.

Let us formulate a problem, the solution $K^A(x,y;\tau)$ of which is called the heat kernel,
\begin{equation}
	\label{tepl}
	\begin{cases}
		(\partial_\tau+A(x)+m^2)K^A(x,y;\tau)=0;
		\\
		K^A(x,y;0)=g^{-1/2}(x)\delta(x-y),
	\end{cases}
\end{equation}
where $m$ is a positive constant mass parameter and $A(x)$ is a Laplace operator with smooth coefficients. In local coordinates it has the following form
\begin{equation}
	\label{op}
	A(x)=-g^{-1/2}(x)D_{x^\mu} g^{1/2}(x)g^{\mu\nu}(x)D_{x^\nu}-v(x).
\end{equation}

Here $D_{x^\mu}=\partial_{x^\mu}+B_\mu(x)$ is the covariant derivative, $v(x)$ is a smooth potential, and $g(x)$ is the metric tensor determinant. We know from the general theory that the Laplace operator has a discrete spectrum and can have only one accumulation point at $+\infty$, see Ref. \cite{103}.
Moreover, we will further assume that all eigenvalues of $A+m^2$ are positive.

Let us study a solution for problem (\ref{tepl}) by using a suitable ansatz. For the small enough values of the proper time  $\tau$ the heat kernel is given by the following asymptotic series
\begin{equation}\label{K_0}
	K^A(x,y,\tau)\stackrel{\tau\to+0}{\sim}
	K(x,y,\tau):=\frac{\Delta^{1/2}(x,y)}{(4\pi\tau)^{d/2}}e^{-\sigma(x,y)/2\tau-\tau m^2}\sum_{k=0}^{+\infty}\tau^ka_k(x,y),
\end{equation}
where $a_k(x,y)$, $k\geqslant0$, are the Seeley--DeWitt coefficients \cite{10,8}, $\sigma(x,y)$ is the Synge’s world function \cite{104}. Note, that if in the domain $U$ the space is flat,  $g_{\mu\nu}(x)=\delta_{\mu\nu}$, then the Synge’s world function $\sigma(x,y)$ equals $|x-y|^2/2$. Then, $\Delta(x,y)$ is the Van-Vleck--Morette determinant \cite{105}, which is defined by the formula
\begin{equation}
	\Delta(x,y)=\left(g(x)g(y)\right)^{-1/2}\det\left(-\frac{\partial^2\sigma(x,y)}{\partial x^\mu\partial y^\nu}\right).
\end{equation}
Also, we have introduced the notation $K(x,y,\tau)$ for the series on the right hand side of formula (\ref{K_0}). When $\tau\to+0$, it can be considered as the asymptotic series in powers of the proper time $\tau$. In all other cases, when $\tau$ is finite or $\tau\to+\infty$, the function $K(x,y,\tau)$ should be understood as the formal series with the Seeley--DeWitt coeffitients. It is quite easy to verify, that the function near the each Seeley--DeWitt coefficient is smooth, bouded, and exponentially decreasing for all $x,y\in U$ such that $x\neq y$.

For convenience, we introduce some useful notations
\begin{equation}
	\sigma_\mu(x,y)=\partial_{x^\mu}\sigma(x,y)\,\,\,\,\mbox{and}\,\,\,\,\sigma^\mu(x,y)=g^{\mu\nu}(x)\sigma_\nu(x,y),
\end{equation}
then the following relations hold
\begin{equation}
	g^{\mu\nu}(x)\sigma_\mu(x,y)\sigma_\nu(x,y)=2\sigma(x,y)
\end{equation}
and
\begin{equation}
	\label{det}
	2\sigma^\nu(x,y)\partial_{x^\nu}\Delta^{1/2}(x,y)=d\Delta^{1/2}(x,y)
	-\Big(g^{-1/2}(x)\partial_{x^\mu} g^{1/2}(x)g^{\mu\nu}(x)\partial_{x^\nu}\sigma(x,y)\Big)\Delta^{1/2}(x,y).
\end{equation}

Using the latter properties, after substituting the expansion (\ref{K_0}) for the heat kernel into the heat equation  (\ref{tepl}), we obtain the following recurrence relations
\begin{equation}
	\label{rec}
	\begin{cases}
		\sigma^\mu(x,y) D_\mu a_0(x,y)=0,\,\,a_0(x,x)=1;\\
		(k+1+\sigma^\mu D_\mu)a_{k+1}(x,y)=-\Delta^{-1/2}(x,y)A(x)\Delta^{1/2}(x,y)a_k(x,y),\,\,k\geqslant0,
	\end{cases}
\end{equation}
that define the Seeley--DeWitt coefficients, introduced above in formula (\ref{K_0}). Additionally, we assume, that the domain $U$ is chosen small enough, so that the Synge’s world function and the Seeley--DeWitt coefficients exist as smooth functions of two variables $x,y\in U$.

Now we also should draw attention to the fact that the Synge’s world function and the Seeley--DeWitt coefficients have a local nature. It means that they can be constructed inside the domain $U$ without using information about the manifold from $M\setminus U$. This leads to the essential difference between $K^A(x,y,\tau)$ and $K(x,y,\tau)$, because the first one is the global object. The asymptotic equality imply, that the difference $K^A(x,y,\tau)-K(x,y,\tau)$, when $\tau\to+0$, can contain an exponentially small term, which carries information about the whole manifold. Despite this the both functions satisfy problem (\ref{tepl}).

\subsection{Hankel transform}
\label{subsec:han}
Let $\lambda$ and $\phi_{\lambda}$ be an eigenvalue and an eigenfunction of the operator $A+m^2$, respectively. This means that $(A+m^2)\phi_\lambda=\lambda\phi_\lambda$. Using the assumptions described above we obtain $\lambda>0$. In this case the heat kernel has the following form
\begin{equation}
	\label{g30}
	K^A(x,y;\tau)=\sum_{\lambda}e^{-\tau\lambda}\phi_\lambda^{\phantom{*}}(x)\phi_\lambda^*(y),
\end{equation}
and the last eigenfunctions satisfies the completeness and the orthogonality relations
\begin{equation}
\label{g22}
\sum_{\lambda}\phi_\lambda^{\phantom{*}}(x)\phi_\lambda^*(y)=\mathbf{1}(x,y)
,\,\,\,
\int_Md^dy\,\phi_{\lambda_1}^*(y)g^{1/2}(y)\phi_{\lambda_2}^{\phantom{*}}(y)
=\delta_{\lambda_1,\lambda_2},
\end{equation}
where $\mathbf{1}(x,y)=g^{-1/2}(x)\delta(x-y)$.

Also, we introduce the Green's function for the operator $A(x)+m^2$ by the equality
\begin{equation}
	\label{eqgreen}
	G^A(x,y)=\sum_{\lambda}\lambda^{-1}\phi_\lambda^{\phantom{*}}(x)\phi_\lambda^*(y),\,\,\,
	\mbox{that satisfies}\,\,\,\,
	(A(x)+m^2)G^A(x,y)=\mathbf{1}(x,y).
\end{equation}

Then, we consider a problem similar to (\ref{tepl}), but with the inverse operator
\begin{equation}\label{eqN}
	\begin{cases}
		\partial_\tau N^A(x,y;\tau)+\int_Md^dz\,G^A(x,z)g^{1/2}(z)N^A(z,y;\tau)=0;\\
		N^A(x,y;0)=\mathbf{1}(x,y).
	\end{cases}
\end{equation}

For convenience let us turn to a solution of an alternative problem by acting the operator $A(x)+m^2$ on the left hand side of relation (\ref{eqN}). Then the problem reduces to the form
\begin{equation}\label{eqAN}
	(\partial_\tau(A(x)+m^2)+1)N^A(x,y;\tau)=0,\,\,\,\,\,\,
	N^A(x,y;0)=\mathbf{1}(x,y).
\end{equation}

It is quite easy to show that the formula holds
\begin{equation}
	\label{gl8}
	N^A(x,y;\tau)=\sum_{\lambda}e^{-\tau/\lambda}\phi_\lambda^{\phantom{*}}(x)\phi_\lambda^*(y).
\end{equation}

For small values of the parameter $\tau$ we can rewrite the last formula in the following form
\begin{equation}
\label{g20}
N^A(x,y;\tau)=\mathbf{1}(x,y)-\tau\, G^A_{\phantom{0}}(x,y)+\sum_{k=2}^{+\infty}\frac{(-\tau)^k}{k!}
G^A_k(x,y),
\end{equation}
where we have used the completeness of the eigenfunctions (\ref{g22}), definition (\ref{eqgreen}), and notation for the $k$-fold convolution of the Green's function with itself
\begin{equation}
\label{g21}
G^A_{k+1}(x,y)=
\sum_{\lambda}\lambda^{-(k+1)}\phi_\lambda^{\phantom{*}}(x)\phi_\lambda^*(y)\,\,\,\mbox{for}\,\,\,k\in\mathbb{N}.
\end{equation}
Here we want to pay attention on the fact, that series (\ref{eqgreen}) for the Green's function converges, see estimates for the spectral parameter in \cite{103}, and, moreover,  the smoothness properties of the functions $G^A_{k}$ from (\ref{g21}) get better with increasing the parameter $k$.

Let us note, that due to the orthogonality of the eigenfunctions we can investigate operator transforms separately for each eigenvalue. Using the integration of the exponential
\begin{equation}
	\label{gl7}
	\lambda^{k}=\int_{\mathbb{R}_+}ds\,\frac{s^{k-1}}{(k-1)!}e^{-s/\lambda}
\end{equation}
and two properties of the Bessel function $J_1(s)$
\begin{equation}\label{Bess}
	-\frac{s}{2}J_1(s)=\sum_{k=1}^{+\infty}\frac{(-s^2/4)^k}{k!(k-1)!},\,\,\,\,\,\,
	\int_{\mathbb{R}_+}ds\,J_1(s)=1,
\end{equation}
after the summation of (\ref{gl7}) by $k$ we obtain
\begin{equation}\label{g19}
	e^{-\lambda\tau}=1+\sum_{k=1}^{+\infty}\int_{\mathbb{R}_+}\frac{ds}{s}\,\frac{(-\tau s)^k}{k!(k-1)!}e^{-s/\lambda}=-\int_{\mathbb{R}_+}ds\,\sqrt{\frac{\tau}{s}} J_1(2\sqrt{\tau s})(e^{-s/\lambda}-1).
\end{equation}

Then, using the set of formulae (\ref{g30}), (\ref{gl8}), and (\ref{g19}), we get the following relation between straight and inverse heat kernels
\begin{align}\label{gl9}
	K^A(x,y;\tau)=-\int_{\mathbb{R}_+}ds\,\sqrt{\frac{\tau}{s}} J_1(2\sqrt{\tau s})(N^A(x,y;s)-\mathbf{1}(x,y)).
\end{align}

Such kind of relation is quite remarkable, because it gives the way to investigate Green's function decompositions. Especially we note that the transformation with the kernel $J_1(s)$ is the Hankel transform. It is very well known, see Ref. \cite{13}, and sometimes it is used to find solutions for differential equations, see Ref. \cite{14}.

Now we want to describe shortly the objects, to which the rest of the paper is devoted. Indeed, they are connected with the study of relation (\ref{gl9}) in terms of a family of new functions, but it is a very uninformative description. To understand the structure of the paper we draw the following diagram:
\begin{equation}
\label{tabll}
\begin{tabular}{ l c l }
$K^A(x,y;\tau)$&$\stackrel{\tau\to+0}{\sim}$&$K(x,y;\tau)$\\
$\,\updownarrow$ \footnotesize{H.tr.}&&$\,\updownarrow$ \footnotesize{H.tr.}\\
$N^A(x,y;\tau)$&&$N(x,y;\tau)$\\
$\,\updownarrow$&&$\,\updownarrow$\\
$G^A_{\phantom{0}}(x,y)$&&$G(x,y)$, $\{G_{k}(x,y)\}_{k\geqslant 2}$
\end{tabular}
\end{equation}

This scheme presents the properties described above and the ones, that will be studied further. First of all, at the top we have relation (\ref{K_0}) between the heat kernel $K^A(x,y;\tau)$ and the function $K(x,y;\tau)$, when $\tau\to+0$. Then, under $K^A(x,y;\tau)$ we can see the kernel $N^A(x,y;\tau)$. They are connected by the Hankel transform in the form (\ref{gl9}). And then, we see the Green's function $G^A(x,y)$, that can be obtained from $N^A(x,y;\tau)$, and vice versa. At the same time, all objects with the index $A$ are global and can be considered at any point of the manifold $M$. Thus, the left part of the diagram reflects the general and well-known relations from the theory of operators \cite{106}.

Now we would like to draw attention to the right hand side of the scheme, with which we are going to work. First of all, we need to note that $K(x,y;\tau)$ can be considered not only as asymptotic series for $\tau\to+0$, but also as the formal series by the Seeley--DeWitt coefficients. As it will be shown in our paper, the Seeley--DeWitt coefficients can be used as a "basis functions", which are not mixed during the transformations by the parameter $\tau$. We have used the quotation marks, because the Seeley--DeWitt coefficients do not form basis in the standard sense. We will identify the equality of formal series by the symbol $=$, because this does not confuse. The equalities of other type we will comment.

Returning to the description of the right hand side of the scheme, we can see, that we are going to obtain the Hankel transform of the function $K(x,y;\tau)$. The resulting object we notate by $N(x,y;\tau)$. As it is easy to verify, $N(x,y;\tau)$ satisfies the differential equation from (\ref{eqgreen}) in the domain $U$, but not the integral equation (\ref{eqN}). It is connected with the fact, that the information about the manifold from $M\setminus U$ gets lost, and only local data, from $U$, remains. In the same way we can consider the kernel $N(x,y;\tau)$ for small values of the proper time, and obtain coefficient near the first degree of $\tau$, which we notate by $G(x,y)$, and the coefficient near $\tau^k$, where $k\geqslant 2$, which we notate by $G_{k}(x,y)$. Thus, all the objects on the right hand side inherit only local information in $U$.

Unlike the left side, where we had on the bottom line only one Green's function, on the right side additionally we have the set $\{G_{k}(x,y)\}_{k\geqslant 2}$, because they can not be represented as a $k$-fold convolution of $G(x,y)$. But they are connected by the differential operator in the following way 
\begin{align}\label{g23}
\big(A(x)+m^2\big)G_{k+1}(x,y)=G_{k}(x,y)\,\,\,\mbox{for}\,\,\,k>1,\,\,\,\mbox{and}\,\,\,
\big(A(x)+m^2\big)G_{2}(x,y)=G(x,y).
\end{align}

After the description of the scheme we can formulate the main aim of our work. We are going to define functions of a special type, that allows us to describe the decompositions for functions from the right hand side of the scheme. Such kind of decompositions are quite remarkable, and have a number of applications in theoretical physics, see Section \ref{apl}.

We are going to call $G(x,y)$ a Green's function instead of the local fundamental solution. We hope this does not mislead the reader, because in the rest of the paper we have only objects without $A$.

Also in the following we are going to omit the arguments $x,y\in U$ of the functions mentioned above, except in cases where it is necessary. Therefore, we have the following reductions:
\begin{equation}
	\label{de}
	A=A(x),\,\,
	a_j=a_j(x,y),\,\,
	\Delta=\Delta(x,y),\,\,
	\sigma=\sigma(x,y),
\end{equation}
\begin{equation}
	\label{de1}
	\mathbf{1}=\mathbf{1}(x,y),\,\,
	K(\tau)=K(x,y;\tau),\,\,
	N(s)=N(x,y;s).
\end{equation}

\section{Family of $\Psi$-functions}
\label{sec:spec}
\begin{definition}
Let $\alpha\in\mathbb{C}$, $\omega\in\mathbb{R}_+$, and $g^{\alpha}=\{g_k^{\alpha}\}_{k=0}^{+\infty}$ be a set of complex numbers, depending on the parameter $\alpha$. Then we define the following special function of two variables $x,y\in U$, corresponding to the operator $A$, by the formula
\begin{equation}
\label{def2}
\Psi_\alpha^{\omega}[g^{\alpha}]=\Delta^{1/2}\sum_{k=0}^{+\infty}\frac{(-1)^kg_k^{\alpha}}{2^k}\omega^{k-\alpha}a_k.
\end{equation}
\end{definition}

In general case the last series can be understood as the formal one.  If we have the condition $|\omega|\to+0$, then the series has the asymptotic behaviour. In some special cases of the Seeley--DeWitt coefficients and $g^{\alpha}$ the series converges.
\begin{lemma}
	\label{th}
In addition to the conditions described above let us require, that the following additional relations hold
	\begin{equation}
		\label{soot1}
		g_{k-1}^{\alpha}=g^\alpha_k(k-\alpha),\,\,\,\,\,\,
		g^{\alpha+1}_k=-2g^\alpha_{k-1},\,\,\,\,\,\,
		k\geqslant 1.
	\end{equation}
	
	Then we have
	\begin{equation}\label{Psi_rec_1}
		-2\partial_\omega\Psi_\alpha^{\omega}[g^{\alpha}]=\Psi_{\alpha+1}^{\omega}[g^{\alpha+1}],
	\end{equation}
	\begin{equation}
		\label{Psi_rec_2}
		\Delta^{\frac{1}{2}}\sigma^\mu D_\mu\Delta^{-\frac{1}{2}}\Psi_{\alpha}^{\omega}[g^{\alpha}]=
		-A\Psi_{\alpha-1}^{\omega}[g^{\alpha-1}]-\alpha\Psi_{\alpha}^{\omega}[g^{\alpha}]+
		\frac{\omega}{2}\Psi_{\alpha+1}^{\omega}[g^{\alpha+1}].
	\end{equation}
	
	If we additionally assume, that the inequality $\mathrm{Re}(\alpha)<k+d/2-1$ is satisfied for all $k$, such that $g_k^{\alpha}\neq0$, then we have
	\begin{equation}\label{Psi_rec}
		A\Psi_\alpha^{\sigma}[g^{\alpha}]=(d/2-\alpha-1)\Psi_{\alpha+1}^{\sigma}[g^{\alpha+1}],
	\end{equation}
	where $\sigma$ is the Synge's world function.
\end{lemma}
\noindent\textbf{Proof:}
The first relation (\ref{Psi_rec_1}) follows from the differentiation by the parameter $\sigma$ and applying formulae (\ref{soot1}). The second one can be verified by using formulae (\ref{rec}), (\ref{soot1}), and (\ref{Psi_rec_1}), and using the equality $k\omega^{k-\alpha}=\omega^{-\alpha}\partial_\omega \omega^\alpha\omega^{k-\alpha}$.

Then, let us apply the operator $A$  to the product $\Delta^{1/2}\sigma^\beta a_n$ with $\beta>1-d/2$ and $n\geqslant0$. We get
\begin{equation}
	\label{soot}
	A\left(\Delta^{1/2}\sigma^\beta a_n\right)=-\sigma^\beta\Delta^{1/2}\left(n+1+\sigma^\mu D_\mu\right)a_{n+1}
	-\beta\sigma^{\beta-1}\Delta^{1/2}\left(d+2(\beta-1)+2\sigma^\mu D_\mu\right)a_n,     
\end{equation}
where we have used the relations from (\ref{rec}). Substituting this equality and making a change of variables, after application of the relations from (\ref{soot1}) we obtain the last statement of the lemma. $\square$

\subsection{The case $\alpha\in\mathbb{Z}$}
Let us consider an example of functions from definition (\ref{def2}), where the parameter $\alpha$ takes only integer values.
It is quite obvious that the set of coefficients $g^{\alpha}=\{g^{\alpha}_k\}_{k=0}^{+\infty}$ is uniquely defined by the relations from (\ref{soot1}) and one non-zero value.
\begin{lemma}
	\label{cel1}
	Let $g^0_0=1$, then, taking into account the equalities from (\ref{soot1}), we have 
	\begin{equation}
		\label{cel}
		g^p_k=\frac{(-2)^p}{\Gamma(k-p+1)}\,\,\,\mbox{for all}\,\,\,
		k\geqslant 0\,\,\,\mbox{and}\,\,\,p\in\mathbb{Z}.
	\end{equation}
\end{lemma}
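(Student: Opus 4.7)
The idea is to verify that the closed form $g^p_k=(-2)^p/\Gamma(k-p+1)$, with the standing convention $1/\Gamma(n)=0$ for $n\in\mathbb{Z}_{\leq 0}$, satisfies both recurrences in \eqref{soot1} together with the normalization $g^0_0=1$. By the uniqueness observation stated just before the lemma (the two relations plus one non-zero value determine the whole family), this pins down $\{g^p_k\}_{p\in\mathbb{Z},\,k\geq 0}$ completely, so no further work is needed.

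The verification collapses to three one-line checks. First, $g^0_0=1/\Gamma(1)=1$, matching the normalization. Second, the identity $\Gamma(k-p+1)=(k-p)\Gamma(k-p)$ immediately yields
\begin{equation*}
g^p_{k-1}=\frac{(-2)^p}{\Gamma(k-p)}=(k-p)\frac{(-2)^p}{\Gamma(k-p+1)}=(k-p)\,g^p_k,
\end{equation*}
which is the first relation in \eqref{soot1}. Third, for the second relation one has
\begin{equation*}
g^{p+1}_k=\frac{(-2)^{p+1}}{\Gamma(k-p)}=-2\,\frac{(-2)^p}{\Gamma(k-p)}=-2\,g^p_{k-1}.
\end{equation*}

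The only subtlety worth addressing is the boundary behaviour at indices with $k-p+1\leq 0$, where the proposed formula vanishes because $1/\Gamma$ has a zero. I would check that the recurrences are consistent with $g^p_k=0$ in these cases: for instance, the first recurrence at $\alpha=p\geq 1$, $k=1$ reads $g^p_0=(1-p)g^p_1$, and both sides vanish, either because of the $1-p$ prefactor (when $p=1$) or because of the $\Gamma$-pole inside $g^p_1$ (when $p\geq 2$). I do not anticipate any real obstacle. Should one prefer a self-contained argument bypassing the uniqueness remark, an equivalent route is a two-stage induction: populate the row $p=0$ by iterating $g^0_k=g^0_{k-1}/k$ from $g^0_0=1$; then for $p>0$ propagate upwards via $g^p_k=-2g^{p-1}_{k-1}$ for $k\geq 1$ and fix the missing entry $g^p_0$ from the first recurrence, while for $p<0$ propagate downwards by inverting the second recurrence to $g^p_m=-g^{p+1}_{m+1}/2$.
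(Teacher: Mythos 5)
Your proof is correct and follows exactly the route the paper intends: the paper gives no explicit proof of Lemma \ref{cel1}, relying only on the preceding remark that the relations (\ref{soot1}) together with one non-zero value determine the family uniquely, and your argument simply makes the verification of the closed form (including the boundary cases where $1/\Gamma$ vanishes) explicit. Nothing is missing.
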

\begin{definition}
	\label{psi1}
	Let $p\in\mathbb{Z}$, $\omega\in\mathbb{R}_+$ and $g^p=\{g^{p}_k\}_{k=0}^{+\infty}$, where $g^{p}_k$ is from Lemma \ref{cel1}, then we define the set of functions,  corresponding to the Laplace operator $A$, by the equality
	\begin{equation}
		\Psi_p^{\omega}=\Psi_p^{\omega}[g^p],\,\,\,\,\,\,p\in\mathbb{Z}.
	\end{equation}
\end{definition}

As it is noted earlier, the Seeley--DeWitt coefficient $a_k$ is defined for $k\geqslant0$. Let us expand this definition to the instance $k\in\mathbb{Z}$ by the equality $a_k=0$ for $k<0$. In this case, relations (\ref{rec}) are satisfied for all integer values of $k$.

\begin{lemma}
	\label{le}
	Functions from Definition \ref{psi1} have alternative representation in the form
	\begin{equation}\label{Psi}
		\Psi_p^{\omega}=\Delta^{1/2}\sum_{k=0}^{+\infty}\frac{(-1)^k\omega^{k}}{k!2^{k}}a_{p+k},\,\,\,\,\,\,p\in\mathbb{Z}.
	\end{equation}
\end{lemma}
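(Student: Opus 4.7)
The plan is to substitute the explicit form of the coefficients from Lemma~\ref{cel1} directly into the general definition (\ref{def2}) and perform an index shift. With $g_k^p=(-2)^p/\Gamma(k-p+1)$, the sum in (\ref{def2}) becomes
\[
\Psi_p^{\omega}=\Delta^{1/2}\sum_{k=0}^{+\infty}\frac{(-1)^k(-2)^p}{2^k\,\Gamma(k-p+1)}\,\omega^{k-p}\,a_k,
\]
so everything reduces to matching this against (\ref{Psi}).

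Next I would clean up the summation range by exploiting two complementary vanishings. For $p\geqslant 1$, the factor $1/\Gamma(k-p+1)$ is zero whenever $0\leqslant k\leqslant p-1$, since $\Gamma$ has poles at non-positive integers; thus the sum is effectively supported on $k\geqslant p$. For $p\leqslant 0$, the Gamma factor never vanishes, but after shifting the summation index the convention $a_k=0$ for $k<0$ (introduced right before the statement of Lemma~\ref{le}) will truncate the unwanted terms at the bottom of the sum. Either way, writing $k=p+j$ with $j$ ranging over $\mathbb{Z}_{\geqslant 0}$ gives a single uniform expression.

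Finally, the sign and power-of-two bookkeeping needs to be checked: one verifies
\[
\frac{(-1)^{p+j}(-2)^p}{2^{p+j}\,\Gamma(j+1)}=\frac{(-1)^j}{2^j\,j!},
\]
after which $\omega^{k-p}=\omega^j$ and $a_k=a_{p+j}$ reproduce the formula (\ref{Psi}). The computation is essentially routine; the only thing demanding care is making the argument covariant in the sign of $p$, and this is handled precisely by pairing the poles of $\Gamma$ with the zero extension of the Seeley--DeWitt coefficients, so that both cases $p\geqslant 0$ and $p<0$ are treated by the same change of variable.
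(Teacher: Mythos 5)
Your proof is correct and follows the same route as the paper's (very terse) argument: substitute the explicit coefficients from Lemma~\ref{cel1} into definition (\ref{def2}), use the vanishing of $1/\Gamma(k-p+1)$ for $k<p$ together with the convention $a_k=0$ for $k<0$, and reindex. The sign and power-of-two bookkeeping you verify is exactly right, so nothing is missing.
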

\noindent\textbf{Proof:}
The statement follows from definition (\ref{def2}), Lemma \ref{cel1}, and the remark that $g^p_k=0$ for $k<p$. $\square$

It follows from the last lemma that the function $\Psi_p^{\omega}$ does not contain negative powers of  $\omega$ in its series representation. This means that we can consider any value of $p$ from $\mathbb{Z}$.

\begin{corollary}
	\label{cor}
	Let $d$ be even, then from Lemma \ref{th} we have $A\Psi_{d/2-1}^{\sigma}=0$.
\end{corollary}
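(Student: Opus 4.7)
The plan is to deduce this corollary directly from formula (\ref{Psi_rec}) of Theorem \ref{th}. Since $d$ is even, $\alpha := d/2 - 1$ is a non-negative integer, so the set $g^{\alpha}$ from Definition \ref{psi1} is well-defined and the function $\Psi_{d/2-1}^{\sigma}$ coincides with $\Psi_{d/2-1}^{\sigma}[g^{d/2-1}]$. First I would substitute $\alpha = d/2 - 1$ into the right-hand side of (\ref{Psi_rec}); the prefactor is
\begin{equation*}
d/2 - \alpha - 1 = d/2 - (d/2-1) - 1 = 0,
\end{equation*}
so once (\ref{Psi_rec}) is applicable, the conclusion is immediate.

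The only nontrivial step is therefore to verify the hypothesis of (\ref{Psi_rec}), namely that $\mathrm{Re}(\alpha) < k + d/2 - 1$ for every $k$ with $g_k^{\alpha} \neq 0$. Here I would invoke Lemma \ref{cel1}: by the explicit formula $g^p_k = (-2)^p/\Gamma(k-p+1)$, the coefficient $g_k^{d/2-1}$ vanishes whenever $k < d/2 - 1$. Thus the nonzero indices are $k \geqslant d/2 - 1$, and the required inequality $d/2 - 1 < k + d/2 - 1$ reduces to $k \geqslant 1$, which is satisfied for every $d \geqslant 4$ since then $d/2 - 1 \geqslant 1$.

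Combining these two observations, (\ref{Psi_rec}) yields $A\Psi_{d/2-1}^{\sigma} = 0 \cdot \Psi_{d/2}^{\sigma} = 0$, which is the claim.

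The main obstacle I anticipate is the edge case $d = 2$: then $\alpha = 0$ and the condition on $\mathrm{Re}(\alpha)$ fails for $k = 0$, where $g^0_0 = 1 \neq 0$, so Theorem \ref{th} does not directly apply. In that situation the leading term of $\Psi_0^{\sigma}$ carries $\sigma^0$ and the manipulation with $\sigma^\beta$ in equation (\ref{soot}) breaks down. I would handle this case separately, either by a direct verification using the recurrence (\ref{rec}) for $a_0$ (which via $\sigma^\mu D_\mu a_0 = 0$ and the definition of $\Delta$ gives $A(\Delta^{1/2} a_0) = -\Delta^{1/2} a_1$, balanced by the $k=1$ term), or by remarking that the corollary is primarily of interest for $d \geqslant 4$ and invoking an analytic-continuation argument in $\alpha$ to extend (\ref{Psi_rec}).
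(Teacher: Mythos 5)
Your proof is correct and follows the same route the paper implicitly takes: substitute $\alpha = d/2-1$ into (\ref{Psi_rec}) and observe that the prefactor $d/2-\alpha-1$ vanishes, the hypothesis on $\mathrm{Re}(\alpha)$ being guaranteed by the vanishing of $g_k^{d/2-1}$ for $k<d/2-1$ from Lemma \ref{cel1}. The $d=2$ edge case you flag is well spotted but ultimately harmless: the only term in (\ref{soot}) that could misbehave at $\beta=k-\alpha=0$ carries an explicit factor of $\beta$ and so vanishes identically, which means the argument behind (\ref{Psi_rec}) goes through for $\alpha=0$, $d=2$ without a separate analytic continuation.
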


Two important conclusions can be drawn from Lemma \ref{th}. Firstly, in the case of odd dimensions none of the functions $\Psi_p^{\sigma}$ lies entirely in the kernel of the Laplace operator. This leads to a doubly infinite chain of functions on which the Laplace operator is a shift operator. Hence, we have
\begin{equation}
	\label{cep1}
	\ldots\xrightarrow{A}\frac{\Psi_{k-1}^{\sigma}}{\Gamma(d/2-k+1)}
	\xrightarrow{A}\frac{\Psi_{k}^{\sigma}}{\Gamma(d/2-k)}
	\xrightarrow{A}\ldots,
\end{equation}
where $k\in\mathbb{Z}$ and $d$ is odd.

Secondly, in the case of even dimensions, in view of Corollary \ref{cor}, we obtain that the chain is interrupted, and we get two separate pieces of doubly infinite sequence
\begin{align}
	\label{cep}
	\ldots\xrightarrow{A}\Psi_{d/2-2}^{\sigma}
	\xrightarrow{A}\Psi_{d/2-1}^{\sigma}
	\xrightarrow{A}\,&\,0
	\\&\nonumber
	\Psi_{d/2}^{\sigma}
	\xrightarrow{A}-\Psi_{d/2+1}^{\sigma}
	\xrightarrow{A}\ldots
\end{align}

Formula (\ref{Psi_rec_1}) also allows the visual representation in the form of a sequence. However, in this case, the dimension does not matter and we get
\begin{equation}
	\label{cep3}
	\ldots\xrightarrow{-2\partial_\omega}\Psi_{k-1}^{\omega}
	\xrightarrow{-2\partial_\omega}\Psi_{k}^{\omega}
	\xrightarrow{-2\partial_\omega}\Psi_{k+1}^{\omega}
	\xrightarrow{-2\partial_\omega}\ldots,
\end{equation}
where $k\in\mathbb{Z}$.

\subsection{The case $\alpha\in\mathbb{Z}+1/2$} 
Let us consider the second useful example and extend Definition \ref{psi1}.
\begin{lemma}
	\label{cel2}
	Let $g^{1/2}_0=\sqrt{2\pi}$, then, taking into account the equalities from (\ref{soot1}), we have
	\begin{equation}
		\label{cel3}
		g^p_k=(-1)^k\Gamma(p-k)2^{p}\,\,\,\mbox{for all}\,\,\,k\geqslant 0\,\,\,\mbox{and}\,\,\,p\in\mathbb{Z}+1/2.
	\end{equation}
\end{lemma}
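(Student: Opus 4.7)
The plan is to verify the closed form $g_k^p = (-1)^k \Gamma(p-k) 2^p$ by checking it at the single prescribed value and then checking that it is consistent with the two recurrences in (\ref{soot1}), finally invoking a uniqueness argument to conclude.

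First, I would check the base case. At $p=1/2,\ k=0$ the formula yields $\Gamma(1/2)\cdot 2^{1/2} = \sqrt{\pi}\cdot\sqrt{2} = \sqrt{2\pi}$, which matches the hypothesis $g_0^{1/2}=\sqrt{2\pi}$.

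Second, I would verify both recurrences by direct substitution. For the first recurrence $g_{k-1}^{\alpha} = g_k^{\alpha}(k-\alpha)$, inserting the ansatz with $\alpha=p$ gives
\[
(-1)^k\Gamma(p-k)\,2^p\,(k-p) = -(-1)^k (p-k)\Gamma(p-k)\,2^p = (-1)^{k-1}\Gamma(p-k+1)\,2^p,
\]
by $(p-k)\Gamma(p-k)=\Gamma(p-k+1)$, which is exactly $g_{k-1}^{p}$. For the second recurrence $g_k^{\alpha+1} = -2 g_{k-1}^{\alpha}$, both sides reduce to $(-1)^k \Gamma(p-k+1)\,2^{p+1}$ after applying the ansatz, so the identity holds automatically.

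Third, I would argue uniqueness. Since $p\in\mathbb{Z}+1/2$, the factor $k-p$ never vanishes, so the first recurrence in its solved form $g_k^{1/2}=g_{k-1}^{1/2}/(k-1/2)$ determines every $g_k^{1/2}$ for $k\ge 0$ from the prescribed $g_0^{1/2}$. Then the second recurrence, used either as $g_k^{p+1}=-2g_{k-1}^p$ to move up in $p$ or as $g_{k-1}^p=-\tfrac{1}{2}g_k^{p+1}$ to move down, propagates these values to every half-integer $p$. A quick compatibility check — computing $g_{k-1}^{p+1}$ by applying the first recurrence to $g_k^{p+1}=-2g_{k-1}^p$ and comparing with $-2g_{k-2}^p$ obtained from the first recurrence at $\alpha=p$ — shows both routes agree, so the whole family is consistently determined by $g_0^{1/2}$.

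The only subtle point is this last compatibility/uniqueness step: the computational verifications are a two-line application of $\Gamma(z+1)=z\Gamma(z)$, but one must be careful that the two recurrences in (\ref{soot1}) can be iterated in different orders without conflict, and that the half-integer condition guarantees no zero denominators. Once that is settled, the proof is complete because the formula satisfies the same recurrences and the same initial datum as the sequence $g_k^p$, hence coincides with it.
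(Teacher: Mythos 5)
Your proof is correct. The paper states Lemma \ref{cel2} without proof (the remark preceding Lemma \ref{cel1}, that the family $\{g_k^{\alpha}\}$ is uniquely determined by the relations (\ref{soot1}) together with one non-zero value, is exactly the uniqueness argument you spell out), and your verification --- base case $\Gamma(1/2)2^{1/2}=\sqrt{2\pi}$, substitution of the closed form into both recurrences via $\Gamma(z+1)=z\Gamma(z)$, and the observation that $k-p\neq 0$ for half-integer $p$ so the first recurrence propagates without zero denominators --- is precisely the argument the authors leave implicit.
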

\begin{definition}
	\label{psi2}
	Let $p\in\mathbb{Z}+1/2$, $\omega\in\mathbb{R}_+$ and $g^{p}=\{g^{p}_k\}_{k=0}^{+\infty}$, where $g^{p}_k$ is from Lemma \ref{cel2}. Then we extend Definition \ref{psi1} to the set of half-integer indices by the relation
	\begin{equation}
		\Psi_p^\omega=\Psi_p^\omega[g^p],\,\,\,\,\,\,p\in\mathbb{Z}+1/2.
	\end{equation}
\end{definition}
Taking into account definition (\ref{def2}), we conclude, that negative powers of $\sigma$ may occur in the case of half-integer indices. Hence, according to Lemma \ref{th}, relation (\ref{Psi_rec}) holds only for $\alpha<d/2-1$. In other cases, we need to use generalized functions with the support at the point $y$. Anyway, equality (\ref{Psi_rec}) holds for the points $x\in U\setminus\{y\}$.

\begin{lemma}
	\label{lem3}
	Let $d$  be odd, then, given  (\ref{de}), we obtain $A\Psi_{d/2-1}^\sigma/(4\pi)^{d/2}=\mathbf{1}$.
\end{lemma}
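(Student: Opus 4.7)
The plan is to combine a pointwise application of Theorem \ref{th} away from $y$ with a local distributional analysis at $y$, in the spirit of identifying a Hadamard parametrix with a fundamental solution of $A$.

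First, I would apply formula (\ref{Psi_rec}) at $\alpha=d/2-1$. Although the condition $\mathrm{Re}(\alpha)<k+d/2-1$ fails at $k=0$ for this value, the remark following Definition \ref{psi2} tells us that (\ref{Psi_rec}) still holds pointwise on $U\setminus\{y\}$. Since the prefactor $d/2-\alpha-1$ vanishes exactly at $\alpha=d/2-1$, this yields $A\Psi_{d/2-1}^{\sigma}=0$ classically off the diagonal. Because the leading singularity $|x-y|^{-(d-2)}$ of $\Psi_{d/2-1}^{\sigma}$ is locally integrable in $d$ dimensions, $\Psi_{d/2-1}^{\sigma}$ defines a distribution, and by the previous identity $A\Psi_{d/2-1}^{\sigma}$ is a distribution supported at $\{y\}$.

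Next I would extract its delta-coefficient. By Lemma \ref{cel2} the $k=0$ term of $\Psi_{d/2-1}^{\sigma}$ is
\begin{equation*}
T_0=2^{d/2-1}\Gamma(d/2-1)\,\Delta^{1/2}\,\sigma^{-(d/2-1)}\,a_0,
\end{equation*}
while the $k\geqslant 1$ terms scale like $|x-y|^{2k-(d-2)}$ and are strictly less singular. Passing to Riemann normal coordinates based at $y$, so that $\sigma=|x-y|^{2}/2$, $\Delta^{1/2}(y,y)=a_0(y,y)=g^{1/2}(y)=1$, and the principal symbol of $A$ reduces to minus the flat Laplacian $\Delta_{\mathbb{R}^d}$, the standard distributional identity $\Delta_{\mathbb{R}^d}|x-y|^{-(d-2)}=-(d-2)s_{d-1}\,\delta(x-y)$ with $s_{d-1}=2\pi^{d/2}/\Gamma(d/2)$, combined with $(d-2)\Gamma(d/2-1)=2\Gamma(d/2)$, produces the delta contribution
\begin{equation*}
A T_0\big|_{\mathrm{delta}}=2^{d-2}\Gamma(d/2-1)(d-2)\,s_{d-1}\,\delta(x-y)=(4\pi)^{d/2}\,\delta(x-y).
\end{equation*}
The subleading terms ($k\geqslant 1$), when differentiated twice, yield locally integrable functions $|x-y|^{2k-d}$ with no delta contribution, and the smooth corrections coming from $\Delta^{1/2}$, $a_0$ and the curvature expansion of $A$ lower the order of every singularity by at least one.

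The cleanest way to package the bookkeeping is to pair both sides against a test function $\phi$ supported near $y$, apply Green's formula on $\{|x-y|>\varepsilon\}$, use that $A\Psi_{d/2-1}^{\sigma}=0$ on that region, and read off $\langle A\Psi_{d/2-1}^{\sigma},\phi\rangle$ as the $\varepsilon\to 0$ limit of a surface integral on $\{|x-y|=\varepsilon\}$. The main technical obstacle is checking that only the Newtonian part of $T_0$ survives: every other contribution carries an extra power of $\varepsilon$ on the sphere of radius $\varepsilon$ (whose area is $O(\varepsilon^{d-1})$), so the surviving residue is exactly $(4\pi)^{d/2}\phi(y)$. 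Using $\mathbf{1}(x,y)=g^{-1/2}(x)\delta(x-y)$ together with $g^{1/2}(y)=1$ in normal coordinates, this proves $A\Psi_{d/2-1}^{\sigma}/(4\pi)^{d/2}=\mathbf{1}$.
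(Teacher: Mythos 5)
Your argument is correct and follows essentially the same route as the paper: relation (\ref{Psi_rec}) kills everything off the diagonal (the prefactor $d/2-\alpha-1$ vanishes at $\alpha=d/2-1$), so only the leading singular term proportional to $\Delta^{1/2}\sigma^{1-d/2}$ can produce a delta, whose coefficient is fixed by the flat Newtonian-potential identity in normal coordinates; your normalization check $2^{d-2}\Gamma(d/2-1)(d-2)s_{d-1}=(4\pi)^{d/2}$ confirms the constant. The paper's proof is simply a two-line compression of this same reasoning, so your write-up is just a more detailed version of it.
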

\noindent\textbf{Proof:}
Due to the presence of relation (\ref{Psi_rec}) we have to check the equality only for the singular part with $\sigma^{1-d/2}$. Then, from the formula
\begin{equation}
	A_0(x)\frac{\Delta^{1/2}(x,y)}{(4\pi)^{d/2}}\sqrt{\pi}\sigma^{1-d/2}(x,y)
	=g^{-1/2}(x)\delta(x-y),
\end{equation}
where $A_0(x)=-g^{-1/2}(x)\partial_{x^{\mu}}g^{1/2}(x)g^{\mu\nu}(x)\partial_{x^{\nu}}$,
the statement of the lemma follows. $\square$

Thus, in the case of half-integer index values we can also construct a sequence on which the Laplace operator is a shift operator. Such a chain terminates on one side and has the following form
\begin{equation}
	\label{cep2}
	\ldots\xrightarrow{A}\frac{1}{2}\Psi_{d/2-3}^\sigma
	\xrightarrow{A}\Psi_{d/2-2}^\sigma
	\xrightarrow{A}\Psi_{d/2-1}^\sigma
	\xrightarrow{A}(4\pi)^{d/2}\mathbf{1}.
\end{equation}

The sequence, on which the derivative by the parameter $\sigma$ is a shift operator, is similar to (\ref{cep3}) with the only change that $k\in\mathbb{Z}\to k\in\mathbb{Z}+1/2$.

\section{Odd-dimensional case}
\label{sec:odd}
\subsection{The main result}
\label{sec:odd:res}
\begin{theorem}
	\label{th2}
	Let $K(\tau)$ be the heat kernel expansion (\ref{K_0}) on an odd-dimensional manifold for small enough values of $\tau$, $\Psi$-functions be from Definitions \ref{psi1} and \ref{psi2}, and $\omega>0$. Then, under the general conditions of Section \ref{sec:pro}, we have
	\begin{equation}
		\label{thh}
		K(\tau)e^{-\frac{\omega-\sigma}{2\tau}}=-\int_{\mathbb{R}_+} ds\,\sqrt{\frac{\tau}{s}}J_1(2\sqrt{\tau s})\Bigg(
		\sum_{n=1}^{+\infty}\frac{f_n(s)}{(4\pi)^{d/2}}\Psi_{d/2-n}^\omega+
		\sum_{n\in\mathbb{Z}}\frac{g_n(s)}{(4\pi)^{d/2}}\Psi_{(d-1)/2-n}^\omega
		\Bigg),
	\end{equation}
	where
	\begin{align}
		\label{fun}
		f_n(s)&=\sum_{k=1}^{n}\frac{(-s)^k(-m^2)^{n-k}}{k!(k-1)!\Gamma(n-k+1)},\\\nonumber
		g_{n}(s)&=\frac{\pi(-1)^{n-1}m^{2n-1}s}{\Gamma(n+1/2)}{}_1\mathrm{F}_1(1/2-n,2;-s/m^2),
	\end{align}
	and ${}_1\mathrm{F}_1$ is the confluent hypergeometric function of the first kind.
\end{theorem}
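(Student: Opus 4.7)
The plan is to establish (\ref{thh}) as an asymptotic identity as $\tau\to 0^+$, by matching both sides as formal Laurent series in $\tau$, $\omega$, and $m^2$, with coefficients in the span of $\{\Delta^{1/2}a_k\}$. First I would expand the LHS directly: substituting (\ref{K_0}) and Taylor-expanding both $e^{-\omega/2\tau}$ and $e^{-\tau m^2}$ yields
\[
K(\tau)e^{-(\omega-\sigma)/2\tau}=\frac{\Delta^{1/2}}{(4\pi)^{d/2}\tau^{d/2}}\sum_{j,k,l\ge 0}\frac{(-\omega)^j(-m^2)^l}{j!\,l!\,2^j}\,\tau^{k+l-j}\,a_k,
\]
a doubly infinite Laurent series in $\tau$ whose characteristic half-integer prefactor $\tau^{-d/2}$ is determined by the odd dimensionality.

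On the RHS the $\Psi$-functions carry no $s$-dependence and factor out of the Hankel integral. Inserting Lemma \ref{le} for the integer-index family $\Psi_{(d-1)/2-n}^{\omega}$ and Definition \ref{psi2} with Lemma \ref{cel2} for the half-integer-index family $\Psi_{d/2-n}^{\omega}$, the problem reduces to computing the two scalar $\tau$-asymptotics
\[
I^{f}_n(\tau)=\int_{\mathbb{R}_+}ds\,\sqrt{\tau/s}\,J_1(2\sqrt{\tau s})\,f_n(s),\quad I^{g}_n(\tau)=\int_{\mathbb{R}_+}ds\,\sqrt{\tau/s}\,J_1(2\sqrt{\tau s})\,g_n(s).
\]
The natural tool is the Mellin--Barnes representation of $J_1$: writing $J_1(2\sqrt{\tau s})$ as a contour integral and swapping orders produces for each $I$ a Mellin--Barnes integral involving $\mathcal{M}[f_n]$ or $\mathcal{M}[g_n]$, and closing the contour to the left delivers the small-$\tau$ expansion as a sum of residues in $\tau$.

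For $\mathcal{M}[g_n]$ the standard Mellin transform of ${}_1\mathrm{F}_1$ gives a product of $\Gamma$-functions whose pole structure, when shifted through the $\Gamma$-ratio from the Mellin--Barnes kernel of $J_1$, reproduces exactly the $\omega^j$-powers appearing in the expansion of $e^{-\omega/2\tau}$ on the LHS. For $f_n$, a polynomial, I would instead exploit the observation that $f_n(s)$ is precisely the coefficient of $\tau^n$ in $-\sqrt{\tau s}\,J_1(2\sqrt{\tau s})e^{-\tau m^2}$, a direct consequence of the Bessel power series (\ref{Bess}); this recasts $I^{f}_n$ as a Taylor-extraction identity, and when paired and summed against the half-integer $\Psi_{d/2-n}^{\omega}$ it supplies the $\tau^{-d/2}$-prefactor sector of the LHS.

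Matching coefficients of $\omega^j m^{2l}\tau^r a_k\Delta^{1/2}$ on the two sides then closes the proof. The odd-dimensional hypothesis is essential in two places: it keeps $d/2-n\in\mathbb{Z}+1/2$ so that the half-integer Lemma \ref{cel2} governs the first family, and it prevents the Laplace shift-chain (\ref{cep1}) from splitting in the way the even case does (Corollary \ref{cor}). The principal obstacle I anticipate is the analytic justification of the Mellin--Barnes contour manipulations for the oscillatory and not absolutely convergent moments $I^{f}_n,I^{g}_n$, combined with the combinatorial bookkeeping of matching the residues produced by the $\Gamma$-ratios of the ${}_1\mathrm{F}_1$ inside $g_n$ with the Taylor coefficients of $e^{-\omega/2\tau}e^{-\tau m^2}$, after resummation over $n\in\mathbb{Z}$.
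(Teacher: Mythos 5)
Your plan is genuinely different from the paper's, and it stumbles on a step that the paper itself warns against. You propose to evaluate the Hankel integrals $I^{f}_n$ and $I^{g}_n$ term by term via Mellin--Barnes and then match Laurent series in $\tau$. But these integrals do not exist even as improper oscillatory integrals: since $J_1(2\sqrt{\tau s})\sim (\tau s)^{-1/4}\cos(2\sqrt{\tau s}-3\pi/4)$, the integrand $\sqrt{\tau/s}\,J_1(2\sqrt{\tau s})f_n(s)$ grows like $s^{\,n-3/4}$ times an oscillation (and $g_n(s)\sim s^{\,n+1/2}$ is worse), so the individual moments diverge; correspondingly $f_n$ and $g_n$ have no Mellin transform on any strip (polynomial growth at infinity, vanishing at the origin), so the Mellin--Barnes kernel swap you rely on has no starting point. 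The paper makes exactly this point in Section \ref{ex1}: $Q_1(s,\sigma)$ alone blows up and ``we can not integrate it''; only the \emph{full} sum over $n$, which resums to $N(s)-\mathbf{1}$ and decays, is Hankel-transformable. You flag this as ``the principal obstacle,'' but resolving it is the whole content of the theorem, and the resolution is essentially to never integrate term by term --- which is a different proof.

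The paper instead establishes the identity at the level of the integrand: it reduces to $d=1$ by $-2\partial_\omega$ (your observation that odd $d$ forces half-integer indices is the same reduction), writes $\tau^n e^{-m^2\tau}=(-\partial_{m^2})^n e^{-m^2\tau}$, uses the Fourier representation $(4\pi\tau)^{-1/2}e^{-\omega/2\tau-m^2\tau}=\tfrac{1}{2\pi}\int d\rho\,e^{i\rho\sqrt{2\omega}}e^{-(\rho^2+m^2)\tau}$ together with (\ref{g19}) at $\lambda=\rho^2+m^2$, and then expands $e^{-s/(\rho^2+m^2)}-1$ in powers of $s$. The crucial idea your proposal is missing is how the coefficients $f_n,g_n$ are actually \emph{produced}: the $k=1$ coefficient of the bracket is the Green function $G_1$ with $(A+m^2)G_1=\mathbf{1}$, so an ansatz $\sum b_n\Psi^{\sigma}_{1/2-n}+\sum c_n\Psi^{\sigma}_{-n}$ is fixed by the shift relation (\ref{Psi_rec}) (giving the recurrences $b_{n+1}=-m^2 b_n/(n-1)$, $c_{n+1}=-m^2c_n/(n+1/2)$) plus two seeds read off from the $a_0$ term; all higher $k$ follow by differentiating in $m^2$, and resummation yields the ${}_1\mathrm{F}_1$. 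Your Taylor-extraction remark about $f_n$ being the coefficient of $\tau^n$ in $-\sqrt{\tau s}\,J_1(2\sqrt{\tau s})e^{-\tau m^2}$ is correct and is a nice consistency check, but it does not evaluate a divergent integral, and ``matching residues'' never tells you why the second family sits at half-integer index $(d-1)/2-n$ with precisely those $\Gamma(n+1/2)$ denominators. As written, the proposal cannot be completed along the announced lines.
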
 
\noindent\textbf{Proof:}
We start with representation (\ref{K_0}). Let us note that we can investigate only one-dimensional case, because a transition to the $d$-dimensional one can be achieved by applying the operator $-(2\pi)^{-1}\partial_{\omega}$ several times. Namely, $(d-1)/2$ times. It follows from formula (\ref{Psi_rec_1}) of Lemma \ref{th} and the equality
\begin{equation}
	\frac{1}{(4\pi\tau)^{d/2}}e^{-\omega/2\tau}=\left(-\frac{1}{2\pi}\frac{\partial}{\partial \omega}\right)^{\frac{d-1}{2}}\frac{1}{(4\pi\tau)^{1/2}}e^{-\omega/2\tau}.
\end{equation}

Hence, without loss of generality we consider only $d=1$. The statement can be achieved in several steps. First of all let us rewrite the left hand side of (\ref{thh}) as
\begin{equation}
	\frac{\Delta^{1/2}}{(4\pi\tau)^{1/2}}e^{-\omega/2\tau}e^{-m^2\tau}\sum_{n=0}^{+\infty}\tau^na_n=
	\frac{\Delta^{1/2}}{(4\pi\tau)^{1/2}}e^{-\omega/2\tau}\sum_{n=0}^{+\infty}a_n(-\partial_{m^2})^ne^{-m^2\tau}.
\end{equation}

Then we note one valuable relation
\begin{equation}\label{i}
	\frac{1}{(4\pi\tau)^{1/2}}e^{-\frac{\omega}{2\tau}}e^{-m^2\tau}=\frac{1}{2\pi}\int_{\mathbb{R}} d\rho\, e^{i\rho\sqrt{2\omega}}e^{-(\rho^2+m^2)\tau},
\end{equation}
that follows from applying the Fourier transform. 

Let us transform the exponential $\exp(-(\rho^2+m^2)\tau)$ in formula (\ref{i}) by applying relation (\ref{g19}) with the parameter $\lambda=\rho^2+m^2$. As a result, we get
\begin{equation}
	K(\tau)e^{-\frac{\omega-\sigma}{2\tau}}=-\int_{\mathbb{R}_+}ds\,\sqrt{\frac{\tau}{s}} J_1(2\sqrt{\tau s})
	\bigg[
	\frac{\Delta^{1/2}}{2\pi}
	\sum_{n=0}^{+\infty} a_n(-\partial_{m^2})^n
	\int_{\mathbb{R}} d\rho\, e^{i\rho\sqrt{2\omega}}
	(e^{-s/(\rho^2+m^2)}-1)
	\bigg].
\end{equation}

Thereupon, let us note that the exponential in the last formula can be expanded in the series, because $m>0$ and all integrals converge. Therefore, using the following relations
\begin{equation}
	\label{odd1}
	e^{-s/(\rho^2+m^2)}-1=
	\sum_{k=1}^{+\infty}\frac{(-s)^k}{k!(k-1)!}
	(-\partial_{m^2})^{k-1}\frac{1}{\rho^2+m^2},\,\,\,
	\frac{1}{2m}e^{-\sqrt{2m^2\omega}}=
	\frac{1}{2\pi}\int_{\mathbb{R}}d\rho\,e^{i\rho\sqrt{2\omega}}\frac{1}{\rho^2+m^2},
\end{equation}
the expression in the square brackets can be rewritten in the form
\begin{equation}
	\label{odd}
	\Delta^{1/2}
	\sum_{k=1}^{+\infty}\frac{(-s)^k(-\partial_{m^2})^{k-1}}{k!(k-1)!}
	\Bigg(
	\sum_{n=0}^{+\infty} a_n(-\partial_{m^2})^n
	\frac{e^{-\sqrt{2m^2\omega}}}{2m}
	\Bigg).
\end{equation}

Actually, we need to investigate only the term with $k=1$ in formula (\ref{odd}), because other terms can be obtained by differentiating with respect the parameter $m$. We notice that our task is a combinatorial one. In order to find a formula, we perform the following procedure. We set the parameter $\omega$ equal to the Synge's world function $\sigma$. Then, our construction will satisfy an additional equation (see below). After that we find a formula for the special case. And then, knowing this, we set the parameter $\sigma$ equal to $\omega$ and check, that the obtained formula gives the initial expansion. 

Let us note that we have obtained the series by integer degrees of the parameter $s$.
It is equal to $\hat{N}(s)=N(s)-\mathbf{1}$, see formulae (\ref{eqAN}) and (\ref{gl9}). Hence, it satisfies the equation $(\partial_s(A+m^2)+1)\hat{N}(s)=-\mathbf{1}$. If we look for a solution in the form of a series $\hat{N}(s)=\sum_{k=1}^{+\infty}(-s)^kG_k/k!$, we will find a system of recurrence relations for $G_k$. Luckily, we are interested only in the first one, that has the following form $(A+m^2)G_1=\mathbf{1}$.
For this reason, we can find the solution as a series by $\Psi$-functions with unknown coefficients.
Let us use the fact and take an ansatz in the form
\begin{equation}
	\Delta^{1/2}
	\sum_{n=0}^{+\infty} a_n(-\partial_{m^2})^n
	\frac{1}{2\sqrt{m^2}}e^{-\sqrt{2m^2\sigma}}
	=\sum_{n=1}^{+\infty}\frac{b_n}{\sqrt{4\pi}}\Psi_{1/2-n}^\sigma+
	\sum_{n\in\mathbb{Z}}\frac{c_n}{\sqrt{4\pi}}\Psi_{-n}^\sigma.
\end{equation}

Then, applying the operator $A+m^2$, see equation (\ref{Psi_rec}), and equating the answer to $\mathbf{1}$, we get two recurrence relations

\begin{equation}
	b_{n+1}=-\frac{m^2}{n-1}b_n\,\,\,\mbox{for}\,\,\,n\geqslant 1,\,\,\,
	\mbox{and}\,\,\,
	c_{n+1}=-\frac{m^2}{n+1/2}c_n\,\,\,\mbox{for}\,\,\,n\in\mathbb{Z}.
\end{equation}

This means we need to calculate only two numbers $b_1$ and $c_0$. Moreover, we can limit ourselves to calculating the coefficients near $a_0$. Therefore, using the decomposition $\exp(-\sqrt{2m^2\sigma})/2m=1/2m-\sqrt{\sigma/2}+\ldots$, we get $b_1=1$ and $c_0=\sqrt{\pi/m^2}$. Hence, we obtain

\begin{equation}
	b_{n}=\frac{(-m^2)^{n-1}}{\Gamma(n)},\,\,\,\,\,\,
	c_{n+1}=\pi\frac{(-1)^{n}(m^2)^{n-1/2}}{\Gamma(n+1/2)}.
\end{equation}

Then, using two equalities
\begin{equation}
	(-\partial_m)^{k-1}\sum_{n=1}^{+\infty}\frac{(-m^2)^{n-1}}{\Gamma(n)}\Psi_{1/2-n}^\sigma=
	\sum_{n=k}^{+\infty}\frac{(-m^2)^{n-k}}{\Gamma(n-k+1)}\Psi_{1/2-n}^\sigma,
\end{equation}
\begin{equation}
	(-\partial_m)^{k-1}\sum_{n\in\mathbb{Z}}\frac{\pi(-1)^{n}(m^2)^{n-1/2}}{\Gamma(n+1/2)}\Psi_{-n}^\sigma=\sum_{n\in\mathbb{Z}}\frac{\pi(-1)^{n-k+1}(m^2)^{n-k+1/2}}{\Gamma(n-k+3/2)}\Psi_{-n}^\sigma,
\end{equation}
we can rewrite formula (\ref{odd}) in the following form
\begin{equation}
	\label{we}
	\frac{1}{\sqrt{4\pi}}
	\sum_{k=1}^{+\infty}\frac{(-s)^k}{k!(k-1)!}
	\Bigg(
	\sum_{n=k}^{+\infty}\frac{(-m^2)^{n-k}}{\Gamma(n-k+1)}\Psi_{1/2-n}^\sigma+
	\sum_{n\in\mathbb{Z}}\frac{\pi(-1)^{n-k+1}(m^2)^{n-k+1/2}}{\Gamma(n-k+3/2)}\Psi_{-n}^\sigma
	\Bigg).
\end{equation}

Further, changing the order of summation and using the relation
\begin{equation}
	\sum_{k=1}^{+\infty}\frac{(s/m^2)^k}{k!(k-1)!\Gamma(n-k+3/2)}=\frac{s/m^2}{\Gamma(n+1/2)}
	{}_1\mathrm{F}_1(1/2-n,2;-s/m^2),
\end{equation}
we obtain the statement of the theorem. Let us note it again, that if we set the parameter $\sigma$ equal to $\omega$ in the $\Psi$-functions and expand (\ref{we}) in a series, we will get exactly (\ref{odd}). $\square$

\subsection{Examples}
\label{ex1}
Let us consider some informative calculations. We put $\omega=\sigma$. The key fact in the proof of Theorem \ref{th2} was the inequality $m>0$, thanks to which we have used the series expansion (\ref{odd1}). Now we study the limit transition $m\to+0$ for a special case. Indeed, the functions from (\ref{fun}) have the following behavior near zero 
\begin{equation}
	\lim_{m\to+0}f_n(s)=\frac{(-s)^n}{n!(n-1)!}\,\,\,\mbox{for}\,\,\,n\geqslant 0,
	\,\,\,\mbox{and}\,\,\,
	\lim_{m\to+0}g_{n}(s)=\frac{\pi(-1)^{n-1}s^{n-1/2}}{\Gamma(n+1/2)\Gamma(n+3/2)}
	\,\,\,\mbox{for}\,\,\,n\in\mathbb{Z}.
\end{equation}

This leads to the appearance of negative powers of the variable $s$ in the second equality for $n<1$. Hence, the decomposition at zero is destroyed. This means that either we have to use a different ansatz or impose additional restrictions. Let us choose the second way and consider the case, when $a_n=\delta_{n0}$. We also put $\Delta^{1/2}=1$ for convenience.

Firstly, we consider one-dimensional situation. In the case we can compute the integral in (\ref{i}) for $m=0$ explicitly. It is equal to
\begin{equation}
	\label{QQ}
	\frac{1}{2\pi}\int_{\mathbb{R}} d\rho\, e^{i\rho\sqrt{2\sigma}}(e^{-s/\rho^2}-1)=Q_1(s,\sigma)+Q_2(s,\sigma),
\end{equation}
where
\begin{equation}
	\label{Q1}
	Q_1(s,\sigma)=s\bigg(\frac{\sigma}{2}\bigg)^{\frac{1}{2}} {}_0\mathrm{F}_2\bigg(;\frac{3}{2},2;\frac{s\sigma}{2}\bigg)=\frac{1}{(4\pi)^{1/2}}
	\sum_{k=1}^{+\infty}\frac{(-s)^k}{k!}
	\bigg(\frac{\sigma}{2}\bigg)^{k-1/2}\frac{\Gamma(1/2-k)}{\Gamma(k)},
\end{equation}
and
\begin{equation}
	\label{Q2}
	Q_2(s,\sigma)=-\bigg(\frac{s}{\pi}\bigg)^{\frac{1}{2}} {}_0\mathrm{F}_2\bigg(;\frac{1}{2},\frac{3}{2};\frac{s\sigma}{2}\bigg)=-\frac{\sqrt{\pi s}}{2}\sum_{k=0}^{+\infty}\frac{(s\sigma/2)^k}{k!\Gamma(k+1/2)\Gamma(k+3/2)}.
\end{equation}

Then, using the last formulae, it is easy to check that
\begin{equation}
	\sum_{k=1}^{+\infty}\frac{f_k(s)}{\sqrt{4\pi}}\Psi_{1/2-k}^\sigma\bigg|_{a_n=\delta_{n0},\,m=0}=Q_1(s,\sigma),\,\,\,
	\sum_{k\in\mathbb{Z}}\frac{g_k(s)}{\sqrt{4\pi}}\Psi_{-k}^\sigma
	\bigg|_{a_n=\delta_{n0},\,m=0}=Q_2(s,\sigma).
\end{equation}

The last functions and their sum are depicted in Figure \ref{Fig1} for $\sigma=1/2$. It shows that $Q_1(s,1/2)$ increases to $+\infty$, when $s\to+\infty$. Therefore, we can not integrate it. We should investigate the sum $Q_1+Q_2$, because it goes to zero. The last sum is depicted in Figure \ref{Fig2} for different values of the parameter $\sigma$. As seen, the function has  damped oscillations near the abscissa axis. Actually, the damping, when $s\to+\infty$, can be obtained from formula (\ref{QQ}) explicitly by using the integration by parts.

\begin{figure}[h]
	
	\begin{minipage}[h]{0.48\linewidth}
		\center{\includegraphics[width=1.1\linewidth]{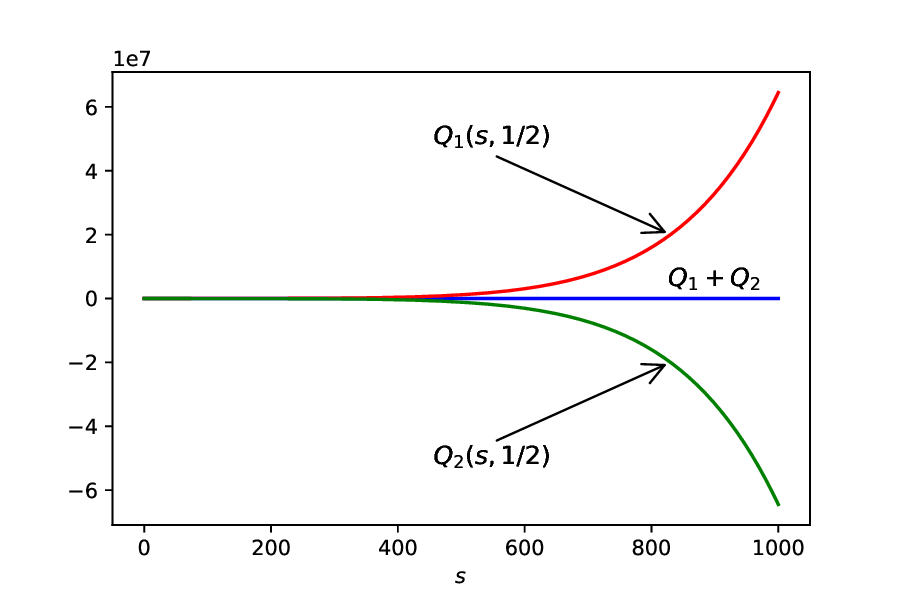}}
		\caption{The functions $Q_1(s,\sigma)$, $Q_2(s,\sigma)$, and their sum, for fixed $\sigma=1/2$, where  $1e7=10^7$.}
		\label{Fig1}
	\end{minipage}
	\hfill
	\begin{minipage}[h]{0.48\linewidth}
		\center{\includegraphics[width=1.1\linewidth]{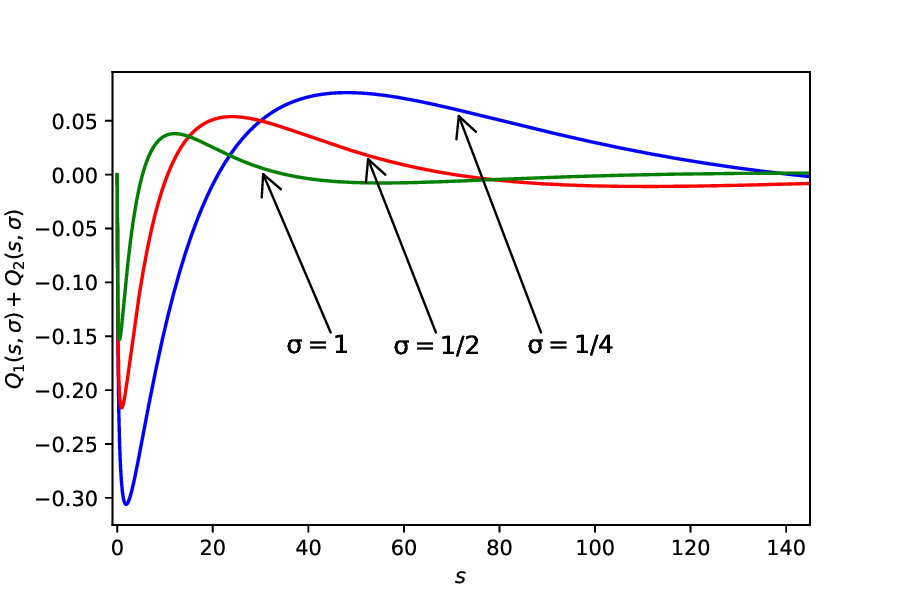}}
		\caption{The sum $Q_1(s,\sigma)+Q_2(s,\sigma)$ for different values of the parameter $\sigma=1,1/2,1/4$.}
		\label{Fig2}
	\end{minipage}
\end{figure}
\begin{figure}[h]
	\begin{center}
		\begin{minipage}[h]{0.48\linewidth}
			\center{\includegraphics[width=1.1\linewidth]{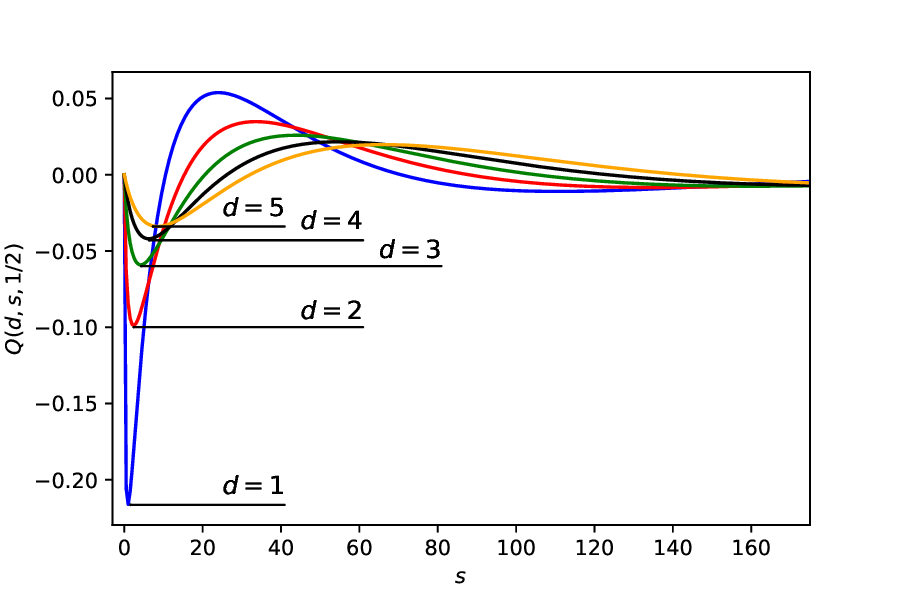}}
			\caption{The function $Q(d,s,\sigma)$ for different values of the parameter $d=1,2,3,4,5$ and fixed $\sigma=1/2$.}
			\label{Fig3}
		\end{minipage}
	\end{center}
\end{figure}

The situation $d=1$ is the case, when the Green's function has a finite limit for $\sigma=0$, because the asymptotic starts from $\sqrt{\sigma}$. But the consideration of the limit $\sigma\to+0$ in equality (\ref{thh}) is not quite correct, since  the oscillations become very large (see Figure \ref{Fig2}). So, permutation of the limit $\sigma\to+0$ and the integration by $s$ is not correct.

The situation $d>1$ can be achieved by differentiation with respect the parameter $\sigma$. So we obtain for odd values of the parameter $d$ the following formula
\begin{multline}
	\label{d}
	Q(d,s,\sigma)=\bigg(-\frac{\partial_\sigma}{2\pi}\bigg)^{\frac{d-1}{2}}[Q_1(s,\sigma)+Q_2(s,\sigma)]\\=
	\frac{1}{(4\pi)^{d/2}}
	\sum_{k=1}^{+\infty}\frac{(-s)^k}{k!}
	\bigg(\frac{\sigma}{2}\bigg)^{k-d/2}\frac{\Gamma(d/2-k)}{\Gamma(k)}\\
	-\frac{\sqrt{\pi s}}{2}\sum_{k=0}^{+\infty}\frac{(-2\pi)^{(1-d)/2}(s/2)^k\sigma^{k-(d-1)/2}}{\Gamma(k-(d+1)/2)\Gamma(k+1/2)\Gamma(k+3/2)}.
\end{multline}

The results for different odd values of the parameter $d$ are depicted in Figure \ref{Fig3}.

\section{Even-dimensional case}
\label{sec:even}
\subsection{Family of $\Phi$-functions}
\label{sec:even:rep}

Let us find the asymptotic expansion of a local solution for the problem $A\tilde{G}=\mathbf{1}$ in $U$. We notate it by $\tilde{G}=\tilde{G}(x,y)$. For the odd dimensional case the asymptotic expansion of the solution at $x\sim y$ is dictated by Lemma \ref{lem3}, that is $\tilde{G}=\Psi_{d/2-1}^\sigma/(4\pi)^{d/2}$. In the case of even $d$ the asymptotic expansion is more intricate and contains a logarithmic component. This is due to the fact that the chain (\ref{cep}) is interrupted. 

Let $d$ be even, then it is known, see Ref. \cite{15}, that for $x\sim y$ the asymptotic expansion has the form
\begin{equation}
\label{anzgreen}
\tilde{G}=\frac{\Delta^{1/2}}{(4\pi)^{d/2}}\sum_{k=0}^{d/2-2}\Gamma(d/2-k-1)
\big(\sigma/2\big)^{k+1-d/2}a_k-\frac{\big[\ln \sigma\big]}{(4\pi)^{d/2}}\Psi_{d/2-1}^\sigma+Q,
\end{equation}
where $Q=Q(x,y)$ is a smooth part.
\begin{lemma}
	\label{l1}
	Under the conditions described above $Q$ from (\ref{anzgreen}) has the form
	\begin{equation}
	\label{Q}
	Q=\frac{\Delta^{1/2}}{(4\pi)^{d/2}}\sum_{k=0}^{+\infty} \frac{(-\sigma/2)^{k+1}H_{k+1}}{(k+1)!}a_{d/2+k}+\frac{\Theta_1}{(4\pi)^{d/2}},
	\end{equation}
	where $H_k$ is the $k$-th harmonic number and $\Theta_1=\Theta_1(x,y)$ is a solution of
	$A\Theta_1=\Psi_{d/2}^\sigma$.
\end{lemma}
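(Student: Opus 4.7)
The plan is to substitute the decomposition $G = G_{\mathrm{sing}} + Q$, with $G_{\mathrm{sing}}$ denoting the two explicit (singular and log-singular) parts on the right-hand side of (\ref{anzgreen}), into the equation $AG = \mathbf{1}$, and to verify that the form of $Q$ proposed in (\ref{Q}) is compatible with this identity—the remaining freedom in $Q$ being absorbed into the choice of $\Theta_1$.

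First I would apply $A$ to the finite sum with negative powers of $\sigma$ using the master identity (\ref{soot}) at $\beta = k+1-d/2$, $n=k$. The prefactors $\Gamma(d/2-k-1)$ are tuned so that successive terms cancel telescopically: the ``lower-order'' piece of (\ref{soot}) at index $k+1$ matches the ``higher-order'' piece at index $k$. What survives are two boundary contributions: the most singular one at $k=0$ produces $\mathbf{1}$ by the even-dimensional analogue of the delta-function calculation in Lemma \ref{lem3}, while the upper boundary at $k=d/2-2$ leaves an uncanceled $\sigma^{-1}$ remainder that must be absorbed by the logarithmic term.

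Next I would handle the logarithmic term via the Leibniz rule for $A$, exploiting $A\Psi_{d/2-1}^\sigma = 0$ from Corollary \ref{cor}. The resulting cross terms
\begin{equation*}
-2g^{\mu\nu}(\partial_\mu\ln\sigma)(D_\nu\Psi_{d/2-1}^\sigma) - \bigl[g^{-1/2}\partial_\mu(g^{1/2}g^{\mu\nu}\partial_\nu\ln\sigma)\bigr]\Psi_{d/2-1}^\sigma
\end{equation*}
are evaluated using $\partial_\mu\ln\sigma = \sigma_\mu/\sigma$ and (\ref{det}); the recurrence (\ref{rec}) then converts each occurrence of $\sigma^\mu D_\mu a_{d/2-1+k}$ into combinations of $a_{d/2+k}$ and $Aa_{d/2-2+k}$. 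The $1/\sigma$-piece produced here exactly cancels the leftover from the previous step, leaving a smooth series of the form $\frac{\Delta^{1/2}}{(4\pi)^{d/2}}\sum_{k\geq 0}\sigma^{k}\,c_k\,a_{d/2+k}$ with explicit constants $c_k$, plus one surviving $\Psi_{d/2}^\sigma/(4\pi)^{d/2}$ contribution.

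Finally I would compute $A$ of the candidate series $\frac{\Delta^{1/2}}{(4\pi)^{d/2}}\sum_{k\geq 0}(-\sigma/2)^{k+1}H_{k+1}a_{d/2+k}/(k+1)!$ once more via (\ref{soot}), this time at $\beta = k+1$, $n=d/2+k$. Matching coefficients order by order in $\sigma$ against the smooth source $c_k$ yields precisely the defining recursion $H_{k+1} = H_k + 1/(k+1)$ of the harmonic numbers (together with $H_0 = 0$), thereby confirming that the proposed $Q$ works; the single residual $\Psi_{d/2}^\sigma$ source is what $\Theta_1$ is defined to cancel.

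The main obstacle is the bookkeeping of the cross terms generated by the logarithm: showing that the interaction of $\partial_\mu\ln\sigma$ with $\sigma^\mu D_\mu$ on the Seeley--DeWitt coefficients produces exactly the $1/(k+1)$ increments that build up $H_{k+1}$. Once this is organized carefully---essentially because differentiating $\ln\sigma$ against $\sigma^{k+1}$ repeatedly pulls down the factor $1/(k+1)$---every remaining step is a direct application of (\ref{soot}), (\ref{rec}) and the identities already proved in Theorem \ref{th}.
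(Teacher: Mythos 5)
Your proposal follows essentially the same route as the paper's proof: substitute the ansatz (\ref{anzgreen}) into $AG=\mathbf{1}$, compute the action of $A$ on the singular and logarithmic parts using (\ref{det}), (\ref{rec}), (\ref{soot}), and (\ref{log1}) to obtain a smooth source for $AQ$, and then verify that the harmonic-number series plus $\Theta_1$ reproduces it. Your write-up merely spells out the telescoping and the cross-term bookkeeping that the paper compresses into ``after some reductions,'' so it is a correct and somewhat more explicit version of the same argument.
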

\noindent\textbf{Proof:} Let us substitute the ansatz (\ref{anzgreen}) into equality (\ref{eqgreen}) and use relations (\ref{det}), (\ref{rec}), (\ref{soot}), and
\begin{equation}
\label{log1}
A_0[\ln\sigma]=\frac{1}{\sigma}[A_0\sigma]+\frac{2}{\sigma},
\end{equation}
where $A_0(x)=-g^{-1/2}(x)\partial_{x^\mu} g^{1/2}(x)g^{\mu\nu}(x)\partial_{x^\nu}$.
Then, after some reductions, we obtain the equality
\begin{equation*}
AQ=\frac{\Delta^{1/2}}{(4\pi)^{d/2}}\sum_{k=0}^{+\infty}\frac{(-1)^k\sigma^k}{(k+1)!2^k}\left(d/2+2k+1+\sigma^\mu D_\mu\right)a_{d/2+k}.
\end{equation*}

The same result can be achieved after calculation of $AQ$ by using representation (\ref{Q}). This fact proves the validity of the lemma statement.
$\square$

\begin{lemma}
	\label{th1}
	Let $\omega\in\mathbb{R}_+$, $\tilde{g}^{k}=\{\tilde{g}_n^{k}\}_{n=0}^{+\infty}$, $k\in\mathbb{Z}$, are sets of complex numbers. Then we define a function of two variables $x,y\in U$, corresponding to the operator $A$, by the formula
	\begin{equation}
	\label{log}
	\Phi_{k}^{\omega}[\tilde{g}^{k}]=-[\ln\omega]\Psi_{k}^\omega+
	\Delta^{1/2}\sum_{n=0}^{+\infty}
	\big(\omega/2\big)^{n-k}\tilde{g}^{k}_n a_n^{\phantom{k}}.
	\end{equation}
	
	We assume that the numbers $\tilde{g}_n^k$ are defined by the following equalities
	\begin{equation}
	\label{ot11}
	\tilde{g}^k_n=\frac{(-1)^{n-k}H_{n-k}}{\Gamma(n-k+1)}=
	\begin{cases}
	\frac{(-1)^{n-k}H_{n-k}}{\Gamma(n-k+1)}, &\mbox{for}\,\,\,n\geqslant k;\\
	\,\,\,\,\,\Gamma(k-n), &\mbox{for}\,\,\,n< k,
	\end{cases}
	\end{equation}
	where in the first equality we have used the analytic continuation of the Gamma function and harmonic number. Let $d$ be even.
	Then the functions $\Phi_{d/2-k}^\omega\big[\tilde{g}^{d/2-k}\big]$ for $k\geqslant1$ satisfy the relations
	\begin{equation}
	\label{gl3}
	A\big(\Phi_{d/2-k-1}^\sigma\big[\tilde{g}^{d/2-k-1}\big]+H_{k}\Psi_{d/2-k-1}^\sigma\big)=
	k\big(\Phi_{d/2-k}^\sigma\big[\tilde{g}^{d/2-k}\big]+H_{k-1}\Psi_{d/2-k}^\sigma\big),
	\end{equation}
	\begin{equation}
	\label{gl31}
	A\Phi_{d/2-1}^\sigma\big[\tilde{g}^{d/2-1}\big]=(4\pi)^{d/2}\mathbf{1}-\Psi_{d/2}^\sigma,
	\end{equation}
	and
	\begin{equation}
	\label{gl32}
	-2\partial_\omega\Phi_{k}^\omega\big[\tilde{g}^{k}\big]=\Phi_{k+1}^\omega\big[\tilde{g}^{k+1}\big]\,\,\,\mbox{for all}\,\,\,k\in\mathbb{Z}.
	\end{equation}
\end{lemma}
\noindent\textbf{Proof:} The proof of relation (\ref{gl3}) for $k\geqslant1$ can be achieved be
explicit calculations and using equalities (\ref{det}), (\ref{rec}), (\ref{Psi_rec}), (\ref{soot}), and (\ref{log1}). The second formula follows from Lemma \ref{l1}. The last relation can be obtained by differentiation with taking into account the equality
\begin{equation}
(k-n)\tilde{g}_n^k+(-1)^n2^{-k}g_n^k=\tilde{g}_n^{k+1},
\end{equation}
that follows from formula (\ref{cel}) and definition (\ref{ot11}).
$\square$

\begin{definition}
	\label{log2}
	Let $d$ be even, $\omega\in\mathbb{R}_+$, $k\in\mathbb{Z}$, and the set of coefficients $\tilde{g}^k=\{\tilde{g}_n^k\}_{n=0}^{+\infty}$ be from Lemma \ref{th1}. Then we define a function, corresponding to the Laplace operator $A$, by the formula
	\begin{equation}
	\Phi_k^\omega=\Phi_{k}^\omega\big[\tilde{g}^k\big].
	\end{equation}
\end{definition}

Therefore, we have two new sequences:
\begin{equation}
\label{se1}
\ldots\xrightarrow{A}\frac{1}{2!}(\Phi_{d/2-3}^\sigma+\frac{3}{2}\Psi_{d/2-3}^\sigma)
\xrightarrow{A}(\Phi_{d/2-2}^\sigma+\Psi_{d/2-2}^\sigma)
\xrightarrow{A}\Phi_{d/2-1}^\sigma
\xrightarrow{A}(4\pi)^{d/2}\mathbf{1}-\Psi_{d/2}^\sigma,
\end{equation}
\begin{equation}
\label{se2}
\ldots\xrightarrow{-2\partial_\omega}\Phi_{d/2-3}^\omega
\xrightarrow{-2\partial_\omega}\Phi_{d/2-2}^\omega
\xrightarrow{-2\partial_\omega}\Phi_{d/2-1}^\omega
\xrightarrow{-2\partial_\omega}\ldots
\end{equation}

\begin{corollary}
	Let $d$ be even, $k\geqslant1$, and $\tilde{g}^k=\{\tilde{g}_n^k\}_{n=0}^{+\infty}$ be a set of coefficients, such that relations (\ref{gl3}) hold. Then we have $\Phi_{d/2-k}^\sigma\big[\tilde{g}^{d/2-k}\big]=\Phi_{d/2-k}^\sigma+\sum_{n=0}^{k-1}\alpha_n^k\Psi_{d/2-1-n}^\sigma$, where $\alpha_n^k\in\mathbb{C}$.
\end{corollary}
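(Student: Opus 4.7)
The plan is to reduce the statement to a linear problem for the difference with the canonical coefficients, and then induct using the shift property of $\Psi$. Define $F_k := \Phi_{d/2-k}^\sigma[\tilde{g}^{d/2-k}] - \Phi_{d/2-k}^\sigma$, the second term being the canonical object from Definition \ref{log2}. The logarithmic contribution $-[\ln\sigma]\Psi_{d/2-k}^\sigma$ is common to both summands and cancels, so $F_k$ is a pure power series in $\sigma$. Since the canonical coefficients also satisfy (\ref{gl3}) and (\ref{gl31}), and since the $H_k\Psi_{d/2-k-1}^\sigma$ shift in (\ref{gl3}) does not depend on the choice of $\tilde{g}$, subtraction yields the homogeneous chain $AF_1 = 0$ and $AF_{k+1}=kF_k$ for $k\geq1$.

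The proof then proceeds by induction on $k$, the goal being $F_k=\sum_{n=0}^{k-1}\alpha_n^k\Psi_{d/2-1-n}^\sigma$. For the base case $k=1$, I would apply the action of $A$ term-by-term to the series representation of $F_1$, using formula (\ref{soot}) and the Seeley--DeWitt recurrence (\ref{rec}) to reduce $AF_1=0$ to a first-order linear recurrence on the series coefficients of $F_1$. The coefficient of the recursion's leading term vanishes exactly at index $n=d/2-1$, which forces all lower-index coefficients to vanish and leaves a single free parameter that propagates to higher indices; comparison with the representation in Lemma \ref{le} identifies this with $\alpha_0^1\Psi_{d/2-1}^\sigma$, consistent with $A\Psi_{d/2-1}^\sigma=0$ from Corollary \ref{cor}.

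For the inductive step, the shift property (\ref{Psi_rec}) with $\alpha=d/2-1-n$ gives $A\Psi_{d/2-1-n}^\sigma=n\Psi_{d/2-n}^\sigma$. Assuming $F_k=\sum_{n=0}^{k-1}\alpha_n^k\Psi_{d/2-1-n}^\sigma$, a particular solution of $AF_{k+1}=kF_k$ within $\mathrm{span}\{\Psi_{d/2-1-n}^\sigma:0\leq n\leq k\}$ is obtained by matching coefficients level-by-level, yielding the explicit recurrence $\alpha_{m+1}^{k+1}=k\alpha_m^k/(m+1)$ for $m=0,\dots,k-1$. The general solution is this particular one plus a kernel element still compatible with the ansatz of $F_{k+1}$; repeating the base-case kernel analysis at level $k+1$ shows this kernel space contributes only one additional free $\alpha$, completing the induction.

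The main technical obstacle is the kernel computation. Applied to $\Delta^{1/2}\sigma^\beta a_n$, formula (\ref{soot}) produces both $\Delta^{1/2}a_m$ and $\Delta^{1/2}\sigma^\mu D_\mu a_m$ contributions, and these must be tracked separately, using the recurrence (\ref{rec}) to eliminate redundancies. One then obtains a pair of coupled linear recurrences on the series coefficients whose only solution consistent with the power-of-$\sigma$ structure of the ansatz is (up to scalar) the matching $\Psi$-function. Care is also required with index ranges, since $a_{-1}=0$ and the series starts at $n=0$, which imposes boundary conditions that ultimately pin down the parametrization in terms of the $\alpha_n^k\in\mathbb{C}$ claimed in the corollary.
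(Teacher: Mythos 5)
Your proposal is correct and follows essentially the same route as the paper: form the difference, argue that the logarithmic part and the negative powers of $\sigma$ are fixed by the chain relations so that only a regular series remains, and identify that series with a span of $\Psi$-functions via the shift action $A\Psi_{d/2-1-n}^{\sigma}=n\Psi_{d/2-n}^{\sigma}$ coming from Theorem \ref{th}. The paper compresses this into three sentences (uniqueness of the singular part plus an appeal to Theorem \ref{th}), whereas you unpack it as an induction with the explicit recurrence $\alpha_{m+1}^{k+1}=k\alpha_m^k/(m+1)$; the only presentational wobble is that you declare $F_k$ a pure power series immediately after cancelling the logarithm, while the vanishing of the negative powers is really only supplied later by your kernel analysis (the paper's ``negative degrees are fixed uniquely by (\ref{gl3})'').
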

\noindent\textbf{Proof:}
It is enough to consider the difference $\Phi_{d/2-k}^\sigma\big[\tilde{g}^{d/2-k}\big]-\Phi_{d/2-k}^\sigma$ and use two facts. Firstly, negative degrees of $\sigma$ are fixed uniquely by formula (\ref{gl3}). Secondly, the difference contains only positive degrees of $\sigma$. Hence, the statement follows from Lemma \ref{th}.
$\square$
\begin{corollary}Let $d$ be even, $\omega\in\mathbb{R}_+$, and $k\in\mathbb{Z}$, then we have
	\begin{equation}
	\Delta^{\frac{1}{2}}\sigma^\mu D_\mu\Delta^{-\frac{1}{2}}\Phi_{k}^{\omega}=
	-A\Phi_{k-1}^{\omega}-k\Phi_{k}^{\omega}+
	\frac{\omega}{2}\Phi_{k+1}^{\omega}-\Psi_k^\omega.
	\end{equation}
\end{corollary}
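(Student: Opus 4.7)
The plan is to mimic the proof of the analogous relation (\ref{Psi_rec_2}) of Theorem \ref{th}, while tracking the extra contributions produced by the logarithmic piece in the definition (\ref{log}) of $\Phi_k^\omega$. I would first split, following Definition \ref{log2},
\begin{equation*}
\Phi_k^\omega=-[\ln\omega]\,\Psi_k^\omega+F_k^\omega,\qquad F_k^\omega:=\Delta^{1/2}\sum_{n\geqslant 0}(\omega/2)^{n-k}\tilde{g}_n^k a_n.
\end{equation*}
Because $\omega$ is independent of $x$, the operator $\Delta^{1/2}\sigma^\mu D_\mu\Delta^{-1/2}$ commutes with multiplication by $[\ln\omega]$, so applying (\ref{Psi_rec_2}) to $\Psi_k^\omega$ reproduces exactly the logarithmic contributions of $-A\Phi_{k-1}^\omega$, $-k\Phi_k^\omega$ and $\tfrac{\omega}{2}\Phi_{k+1}^\omega$ on the right-hand side. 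It then remains to prove the non-logarithmic identity
\begin{equation*}
\Delta^{1/2}\sigma^\mu D_\mu\Delta^{-1/2}F_k^\omega=-AF_{k-1}^\omega-kF_k^\omega+\tfrac{\omega}{2}F_{k+1}^\omega-\Psi_k^\omega.
\end{equation*}

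I would then compute both sides as formal power series in $\omega$, collecting the coefficients of $\Delta^{1/2}a_n$ and $A(\Delta^{1/2}a_n)$ separately. On the left-hand side the Seeley--DeWitt recurrence (\ref{rec}) yields $\sigma^\mu D_\mu a_n=-n\,a_n-\Delta^{-1/2}A\Delta^{1/2}a_{n-1}$ for $n\geqslant 1$; after shifting $n\to n+1$ in the $A$-part one obtains coefficients $-n\tilde{g}_n^k$ in front of $\Delta^{1/2}a_n$ and $-\tfrac{\omega}{2}\tilde{g}_{n+1}^k$ in front of $A(\Delta^{1/2}a_n)$, each multiplied by $(\omega/2)^{n-k}$. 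A direct expansion of the right-hand side gives, in the same basis, coefficients $-k\tilde{g}_n^k+\tilde{g}_n^{k+1}-(-1)^{n-k}/\Gamma(n-k+1)$ and $-\tfrac{\omega}{2}\tilde{g}_n^{k-1}$. Equality therefore reduces to two purely algebraic relations, $\tilde{g}_{n+1}^k=\tilde{g}_n^{k-1}$ and $(n-k)\tilde{g}_n^k+\tilde{g}_n^{k+1}=(-1)^{n-k}/\Gamma(n-k+1)$; the first is immediate from formula (\ref{ot11}), and the second is precisely the identity $(k-n)\tilde{g}_n^k+(-1)^n 2^{-k}g_n^k=\tilde{g}_n^{k+1}$ established in the proof of Theorem \ref{th1} (equivalent after rewriting $g_n^k$ via Lemma \ref{cel1}). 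The $n=0$ case of the second relation is exactly what matches the vanishing coefficient on the left arising from $\sigma^\mu D_\mu a_0=0$.

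The main obstacle is conceptual rather than computational: the extra $-\Psi_k^\omega$ on the right-hand side does not arise from any single $\Phi$-term but precisely from the inhomogeneity $(-1)^{n-k}/\Gamma(n-k+1)$ in the second $\tilde g$-identity, which morally reflects that $\tilde g^k$ is a derivative with respect to the index of the family $g^k$ of Lemma \ref{cel1}. I expect the only real nuisance to be the verification of the two $\tilde g$-identities in the singular regime $n<k$, where the analytic continuation in (\ref{ot11}) must be used consistently; both continuations reduce to the standard relation $\Gamma(z+1)=z\Gamma(z)$ and present no genuine difficulty.
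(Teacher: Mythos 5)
Your proposal is correct and follows essentially the same route as the paper: the paper's (one-line) proof cites exactly the ingredients you use, namely the Seeley--DeWitt recurrence (\ref{rec}), relation (\ref{Psi_rec_2}) for the logarithmic piece, the shift property $\tilde{g}_{n+1}^{k}=\tilde{g}_{n}^{k-1}$, and the coefficient identity $(k-n)\tilde{g}_n^k+(-1)^n2^{-k}g_n^k=\tilde{g}_n^{k+1}$ underlying (\ref{gl32}), which is precisely your second $\tilde g$-identity and the source of the extra $-\Psi_k^\omega$. Your expansion of this into an explicit coefficient comparison, including the $n=0$ and $n<k$ checks, fills in the details the paper leaves implicit.
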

\noindent\textbf{Proof:}
It is enough to use formulae (\ref{rec}), (\ref{Psi_rec_2}), (\ref{gl32}), and the property $\tilde{g}_n^k=\tilde{g}_{n+1}^{k+1}$ that follows from (\ref{ot11}).
$\square$
\begin{corollary}
	Let $d$ be even, $\tilde{G}$ and $\Theta_1$ be from Lemma \ref{l1}, then we have $\Phi_{d/2-1}^\sigma+\Theta_{1}=(4\pi)^{d/2}\tilde{G}$.
\end{corollary}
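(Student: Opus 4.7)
The plan is to verify the identity by unfolding $\Phi_{d/2-1}^\sigma$ from its definition and matching the result, term by term, with the explicit form of $(4\pi)^{d/2}G$ obtained by combining (\ref{anzgreen}) with Lemma \ref{l1}. Starting from (\ref{log}) with $k=d/2-1$ and $\omega=\sigma$, the task reduces to splitting the series $\sum_{n\geq 0}(\sigma/2)^{n-d/2+1}\tilde{g}^{\,d/2-1}_n a_n$ into three pieces according to the piecewise definition (\ref{ot11}): the singular range $0\leq n\leq d/2-2$, the boundary index $n=d/2-1$, and the regular tail $n\geq d/2$.

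For the singular range the coefficient equals $\Gamma(d/2-1-n)$, and with the relabeling $k=n$ it reproduces the polynomial sum $\Delta^{1/2}\sum_{k=0}^{d/2-2}\Gamma(d/2-k-1)(\sigma/2)^{k+1-d/2}a_k$ that already appears in (\ref{anzgreen}). The boundary term vanishes because $\tilde{g}^{\,d/2-1}_{d/2-1}=H_0/\Gamma(1)=0$, which is also what makes the two cases in (\ref{ot11}) consistent at the joining index via the stated analytic continuation. For the regular tail, after the shift $k=n-d/2\geq 0$ one reads $\tilde{g}^{\,d/2-1}_{d/2+k}=(-1)^{k+1}H_{k+1}/(k+1)!$, so that portion of $\Phi_{d/2-1}^\sigma$ agrees with $\Delta^{1/2}\sum_{k\geq 0}(-\sigma/2)^{k+1}H_{k+1}a_{d/2+k}/(k+1)!$, which is precisely the first summand of $Q$ in (\ref{Q}).

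Adding back the prefactor $-[\ln\sigma]\Psi_{d/2-1}^\sigma$ built into (\ref{log}) and comparing with (\ref{anzgreen}) together with (\ref{Q}), I read off $\Phi_{d/2-1}^\sigma = (4\pi)^{d/2}G - \Theta_1$, and the claim follows upon transposing $\Theta_1$ to the left-hand side. The main, and essentially only, obstacle is the purely combinatorial bookkeeping of signs, Gamma values, and harmonic-number factors across the three sum ranges; no further analytic input beyond Lemma \ref{l1} is needed.
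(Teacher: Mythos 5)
Your verification is correct and is exactly the argument the paper intends (it states the corollary without proof, as an immediate consequence of comparing definition (\ref{log}) with coefficients (\ref{ot11}) against the expansion (\ref{anzgreen}) and Lemma \ref{l1}). The three-range split of the series, the vanishing of the $n=d/2-1$ term via $H_0=0$, and the reindexing $k=n-d/2$ all check out.
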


To conclude this section, let us give an example of an explicit construction of $\Theta_1$.
\begin{lemma}
	Let the group of functions $\{f_k=f_{k}(x,y)\}_{k=0}^{+\infty}$ satisfy the following system of equations
	\begin{equation}
	\label{mm}
	f_{0}=0,\,\,\,
	-a_{d/2+k}+(d/2+k+\sigma^\mu D_\mu)f_{k+1}=-\Delta^{-1/2}A\Delta^{1/2}f_{k},\,\,\, k\geqslant 0.
	\end{equation}
	Then the series 
	$V=\Delta^{1/2}\sum_{k=1}^{+\infty}(-\sigma/2)^kf_{k}/k!$
	solves the equation $AV=\Psi_{d/2}^\sigma$.
\end{lemma}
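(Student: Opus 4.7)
The plan is to mimic the proof of Lemma \ref{l1}: substitute the series for $V$ into $AV$, act with $A$ termwise, and reduce via the given recurrence. The essential technical tool is the extension of formula (\ref{soot}) to an arbitrary smooth section $h$,
\begin{equation*}
A(\Delta^{1/2}\sigma^\beta h)=\sigma^\beta A(\Delta^{1/2}h)-\beta\sigma^{\beta-1}\Delta^{1/2}\bigl(d+2(\beta-1)+2\sigma^\mu D_\mu\bigr)h,
\end{equation*}
which follows from (\ref{det}) and the Leibniz rule alone, and therefore applies to $h=f_k$ (not just to Seeley--DeWitt coefficients).

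Taking $\beta=k$, $h=f_k$ in this identity and eliminating $A(\Delta^{1/2}f_k)$ through the hypothesis $\Delta^{-1/2}A\Delta^{1/2}f_k=-a_{d/2+k}-(k+1+\sigma^\mu D_\mu)f_{k+1}$, the sum defining $AV$ splits naturally into an $a_{d/2+k}$-series and two $f$-series. The $a_{d/2+k}$-series reassembles to $-\Psi_{d/2}^\sigma$ by Lemma \ref{le}. After the index shift $k\mapsto k+1$ in the $(k+1+\sigma^\mu D_\mu)f_{k+1}$-sum, the two $f$-series combine so that their $\sigma^\mu D_\mu$ parts cancel and only purely algebraic factors in $f_j$ remain; the initial condition $f_0=0$ together with the $k=0$ case of the recurrence (which gives $a_{d/2}+(1+\sigma^\mu D_\mu)f_1=0$) pins down the boundary term. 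What remains is a single series in $\{f_j\}$ that must be matched with the $\tilde{g}_n^{d/2}a_n$-part of $\Phi_{d/2}^\sigma$.

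The matching is the combinatorial heart of the argument: one identifies, by induction on $n$ using (\ref{ot11}), the coefficient of each $a_n$ with $(\sigma/2)^{n-d/2}\tilde{g}_n^{d/2}$, and shows that the logarithmic contribution $-[\ln\sigma]\Psi_{d/2}^\sigma$ arises from the non-smooth kernel components of $(k+1+\sigma^\mu D_\mu)$ built into the $f_k$'s. The related identities (\ref{gl31}) and (\ref{gl32}) of Theorem \ref{th1}, notably $-2\partial_\omega\Phi_{d/2-1}^\omega=\Phi_{d/2}^\omega$, furnish useful cross-checks via the route $AV=A(-\tfrac{1}{2}\partial_\omega(\cdots))$. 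I expect the main obstacle to be precisely this matching: tracking how the harmonic-number coefficients $\tilde{g}_n^{d/2}$ and the singular powers $\sigma^{n-d/2}$ with $n<d/2$ are generated, and specifying which solution of the $f_k$-recurrence is being selected so as to produce exactly these contributions. Everything else in the proof is routine reorganisation of a double series.
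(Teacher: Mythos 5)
Your setup is exactly the ``explicit calculation'' that the paper's one-sentence proof alludes to, and its first half is sound: the extension of (\ref{soot}) to an arbitrary section $h$ is correct, eliminating $A(\Delta^{1/2}f_k)$ through the recurrence is the right move, the transport terms $\sigma^\mu D_\mu f_{j+1}$ do cancel between the two $f$-series, and the $k=0$ relation $a_{d/2}+(1+\sigma^\mu D_\mu)f_1=0$ absorbs the boundary term together with the leftover $\Delta^{1/2}a_{d/2}$ from reassembling $\Psi_{d/2}^\sigma$. But you should have pushed this computation to the end instead of deferring to a ``matching'' step, because carrying it out gives
\begin{equation*}
AV=-\Psi_{d/2}^{\sigma}+\Big(\frac{d}{2}-1\Big)\Delta^{1/2}\sum_{k=1}^{+\infty}\frac{(-\sigma/2)^{k-1}}{(k-1)!}\,f_{k},
\end{equation*}
and at that point the step you call the combinatorial heart of the argument --- matching the residual $f$-series against the $\tilde g_n^{d/2}a_n$-part of $\Phi_{d/2}^\sigma$ --- is seen to be impossible rather than merely delicate. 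The display above is a series in non-negative integer powers of $\sigma$ (for the regular solutions $f_k$ of the transport equations), whereas $\Phi_{d/2}^\sigma$ contains the genuinely singular terms $(\sigma/2)^{n-d/2}\Gamma(d/2-n)a_n$ for $n<d/2$ together with $-[\ln\sigma]\Psi_{d/2}^\sigma$. The mechanism you propose for generating the logarithm --- ``non-smooth kernel components of $(k+1+\sigma^\mu D_\mu)$ built into the $f_k$'s'' --- does not exist: homogeneous solutions of $(k+1+\sigma^\mu D_\mu)h=0$ scale as $\sigma^{-(k+1)/2}$ along geodesics and cannot produce $\ln\sigma$ or integer negative powers. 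A one-line check: in flat space with $a_n=\delta_{n0}$ and $d=2$ the recurrence forces $f_k\equiv0$, hence $AV=0$, while $\Phi_{1}^{\sigma}=2/\sigma\neq0$.

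So the obstacle you flagged is not a hard step but a false target: the identity as printed cannot hold, and the right-hand side is evidently a misprint for $\pm\Psi_{d/2}^\sigma$ --- the lemma is announced as ``an explicit construction of the $\Theta_1$'', and $\Theta_1$ satisfies $A\Theta_1=\Psi_{d/2}^\sigma$ by Lemma \ref{l1}; indeed for $d=2$ the display gives $A(-V)=\Psi_{1}^{\sigma}$ exactly. A complete answer should state this corrected right-hand side, and either restrict to $d=2$ or account for the residual $(d/2-1)$-series in higher even dimensions, rather than attempt to reproduce $\Phi_{d/2}^\sigma$.
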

\noindent\textbf{Proof:}
The statement of the lemma can be verified by explicit calculations with use of equations (\ref{mm}).
$\square$

\subsection{The main result}
\label{sec:even:res}

\begin{lemma}
	\label{lem2}
	Let $d$ be even and $\tau$ be small enough, then the series 
	$\Omega^\omega(\tau)=\sum_{n=0}^{+\infty} \tau^n \Psi_{d/2+n}^\omega$
	for $\omega=\sigma$
	gives a solution for the system
	\begin{equation}
	(\partial_\tau+A)\Omega^\sigma(\tau)=0,\,\,\,\Omega(0)=\Psi_{d/2}^\sigma.
	\end{equation}
\end{lemma}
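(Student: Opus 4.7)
The plan is to verify the lemma termwise in $n$, using the shift-like action of $A$ on $\Psi$-functions of index $\geqslant d/2$ that is already displayed in the right-hand half of the chain (\ref{cep}). The initial condition is immediate: at $\tau=0$ only the $n=0$ summand of $\Omega^\omega(\tau)=\sum_{n\geqslant 0}\tau^n\Psi_{d/2+n}^\omega$ survives, giving $\Omega^\sigma(0)=\Psi_{d/2}^\sigma$.

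For the differential equation, I would first differentiate in $\tau$ term by term and reindex to obtain
$\partial_\tau\Omega^\omega(\tau)=\sum_{n=0}^{+\infty}(n+1)\tau^n\Psi_{d/2+n+1}^\omega$.
Then I would apply $A$ term by term at $\omega=\sigma$. The key input is the specialization of (\ref{Psi_rec}) in Theorem \ref{th} to $\alpha=d/2+n$ with $n\geqslant 0$: since $d/2-\alpha-1=-(n+1)$, this reads $A\Psi_{d/2+n}^\sigma=-(n+1)\Psi_{d/2+n+1}^\sigma$, which is precisely the shift law encoded in (\ref{cep}). Substituting gives
$A\Omega^\sigma(\tau)=-\sum_{n=0}^{+\infty}(n+1)\tau^n\Psi_{d/2+n+1}^\sigma=-\partial_\tau\Omega^\sigma(\tau)$,
and adding the two identities yields $(\partial_\tau+A)\Omega^\sigma(\tau)=0$.

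The main obstacle is analytic rather than algebraic: each $\Psi_{d/2+n}^\sigma$ is itself an infinite series in positive powers of $\sigma$ (by Lemma \ref{le}), so interchanging the outer $\tau$-sum with the second-order differential operator $A$ in $x$, as well as interchanging $\partial_\tau$ with the inner $\sigma$-series, requires some convergence/regularity control. This is the same technical point underlying the heat kernel ansatz (\ref{K_0}) for small $\tau$: one can either read the lemma as an identity of formal power series in $\tau$ whose coefficients are built from Seeley--DeWitt data (in which case the termwise manipulations are definitional), or invoke the local bounds on $a_k(x,y)$ and their derivatives near the diagonal that already justify (\ref{K_0}). Since the algebraic cancellation between $\partial_\tau\Omega^\sigma(\tau)$ and $A\Omega^\sigma(\tau)$ is exact, this analytic bookkeeping is the only real work.
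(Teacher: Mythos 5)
Your proposal is correct and follows essentially the same route as the paper, whose entire proof is the remark that the lemma ``follows from equality (\ref{Psi_rec})'': specializing that shift law to $\alpha=d/2+n$ gives $A\Psi_{d/2+n}^{\sigma}=-(n+1)\Psi_{d/2+n+1}^{\sigma}$, and termwise differentiation in $\tau$ then yields the exact cancellation you describe. Your additional remarks on the initial condition and on reading the identity as one of formal power series (or invoking the small-$\tau$ regularity underlying (\ref{K_0})) only make explicit what the paper leaves implicit.
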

\noindent\textbf{Proof:}
The statement follows from equality (\ref{Psi_rec}).
$\square$

\begin{corollary}
	Let $\omega\in\mathbb{R}_+$ and $\tau$ be small enough. Then,
	under the conditions of Lemma \ref{lem2}, we have
	\begin{equation}\label{K_wave}
	\frac{\Omega^\omega(\tau)}{(4\pi)^{d/2}}=
	\frac{\Delta^{1/2}}{(4\pi\tau)^{d/2}}\sum_{n=d/2}^{+\infty} a_n\tau^n\sum_{k=0}^{n-d/2}\frac{(-\omega/2\tau)^k}{k!}.
	\end{equation}
\end{corollary}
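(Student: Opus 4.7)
The plan is to substitute the explicit series representation for $\Psi_{d/2+n}^{\omega}$ from Lemma \ref{le} into the definition of $\Omega^{\omega}(\tau)$ and then re-index the resulting double sum. The identity is essentially combinatorial: it amounts to grouping the terms of the double series by the total Seeley--DeWitt index.

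First I would apply Lemma \ref{le} with $p=d/2+n$ to write
\begin{equation*}
\Psi_{d/2+n}^{\omega}=\Delta^{1/2}\sum_{k=0}^{+\infty}\frac{(-\omega/2)^{k}}{k!}\,a_{d/2+n+k},
\end{equation*}
valid since $d/2+n\in\mathbb{Z}$ and the sum starts at $k=0$ by the extension $a_{j}=0$ for $j<0$. Plugging this into $\Omega^{\omega}(\tau)=\sum_{n\geq 0}\tau^{n}\Psi_{d/2+n}^{\omega}$ gives a double series in $n$ and $k$.

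Next I would collect terms according to the value of $m:=n+k$, which labels the Seeley--DeWitt coefficient $a_{d/2+m}$ that appears. For fixed $m\geq 0$, the admissible pairs are $(n,k)=(m-k,k)$ with $0\leq k\leq m$, so
\begin{equation*}
\Omega^{\omega}(\tau)=\Delta^{1/2}\sum_{m=0}^{+\infty}a_{d/2+m}\,\tau^{m}\sum_{k=0}^{m}\frac{(-\omega/2\tau)^{k}}{k!}.
\end{equation*}
Renaming $n=d/2+m$ and dividing by $(4\pi)^{d/2}$, using the trivial identity $\tau^{n-d/2}/(4\pi)^{d/2}=\tau^{n}/(4\pi\tau)^{d/2}$, yields exactly the claimed formula (\ref{K_wave}).

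The only non-routine point is the interchange of summation order, but this is unproblematic in the present setting: the statement is understood as an asymptotic expansion for $\tau$ sufficiently small, where we work at the level of formal power series in $\tau$ (each fixed power of $\tau$ receives finitely many contributions once $m=n+k$ is fixed), so no convergence issue arises. Thus the proof reduces to the reordering above and is essentially a one-line verification once Lemma \ref{le} is in hand.
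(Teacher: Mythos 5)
Your proof is correct and follows exactly the route the paper intends: the paper's own proof is the one-line remark that the statement ``follows from combination of Lemmas \ref{le} and \ref{lem2},'' and your substitution of the series from Lemma \ref{le} into $\Omega^\omega(\tau)$ followed by regrouping along $m=n+k$ is precisely that combination, carried out explicitly. The reindexing and the handling of the formal-power-series interchange are both fine.
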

\noindent\textbf{Proof:}
The statement follows from combination of Lemmas \ref{le} and \ref{lem2}.
$\square$

\begin{theorem}
	\label{th3}
	Let $\omega\in\mathbb{R}_+$, $\widetilde{K}^\omega(\tau)=K(\tau)e^{-(\omega-\sigma)/2\tau}-e^{-\tau m^2}\Omega^\omega(\tau)/(4\pi)^{d/2}$, see formulae (\ref{K_0}) and (\ref{K_wave}), on the even-dimensional manifold for small enough values of $\tau$, and $\Phi$-functions and $\Psi$-functions be from Definitions \ref{psi1} and \ref{log2}. Then, under the general conditions of Section \ref{sec:pro}, we have
	\begin{equation}
	\label{tth}
	\widetilde{K}^\omega(\tau)=-\int_{\mathbb{R}_+} ds\, \sqrt{\frac{\tau}{s}}J_1(2\sqrt{\tau s})\Bigg(\sum_{n=1}^{+\infty}\frac{f_n(s)}{(4\pi)^{d/2}}\Phi_{d/2-n}^\omega+\sum_{n=1}^{+\infty} \frac{g_{-n}(s)}{(4\pi)^{d/2}}\Psi_{d/2-n}^\omega\Bigg),
	\end{equation}
	where
	\begin{equation}
	\label{rec1}
	f_{n}(s)=\sum_{k=1}^{n}\frac{(-s)^k(-m^2)^{n-k}}{k!(k-1)!\Gamma(n-k+1)},
	\end{equation}
	and
	\begin{align}
	\label{rec2}
	g_{-n}(s)=&\sum_{k=1}^{n}\frac{(-s)^k(-m^2)^{n-k}(-2\gamma-\ln(m^2/2))}{k!(k-1)!\Gamma(n-k+1)}\\&
	+(-m^2)^{n-1}\frac{\partial}{\partial\epsilon}\bigg|_{\epsilon=0}\frac{se^{-\gamma\epsilon}}{\Gamma(n+\epsilon)}{}_1\mathrm{F}_1(1-n-\epsilon,2;-s/m^2),
	\end{align}
	where $\gamma$ is the Euler--Mascheroni constant.
\end{theorem}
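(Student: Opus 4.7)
The plan is to parallel the proof of Theorem~\ref{th2}, with three modifications dictated by the even-dimensional setting: the base of the dimensional reduction is $d=2$ rather than $d=1$; the Fourier representation (\ref{i}) is replaced by its two-dimensional Bessel analogue; and the ansatz mixes $\Phi$- and $\Psi$-functions, because the chain (\ref{cep}) is broken at $\Psi_{d/2-1}^\sigma$. The subtracted piece $e^{-m^2\tau}\Omega^\omega(\tau)/(4\pi)^{d/2}$ is precisely the contribution of the second half of (\ref{cep}) and lies outside the span of the right-hand side of (\ref{tth}). Using (\ref{Psi_rec_1}), (\ref{gl32}), the identity $(4\pi\tau)^{-d/2}e^{-\omega/2\tau} = (-\partial_\omega/(2\pi))^{(d-2)/2}(4\pi\tau)^{-1}e^{-\omega/2\tau}$, and the compatibility of $-\partial_\omega$ with both $\Phi$- and $\Psi$-sums as well as with $\Omega^\omega(\tau)$, I reduce to the base case $d=2$.

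For $d=2$ I employ the two-dimensional Bessel representation
\begin{equation*}
\frac{1}{4\pi\tau}e^{-\omega/2\tau - m^2\tau} = \frac{1}{2\pi}\int_0^\infty k\,dk\, J_0(k\sqrt{2\omega})\, e^{-(k^2+m^2)\tau},
\end{equation*}
then apply (\ref{g19}) with $\lambda = k^2+m^2$, interchange the $s$- and $k$-integrations, and expand $e^{-s/(k^2+m^2)}-1$ in powers of $s$ exactly as in (\ref{odd1}). The crucial Hankel integral
\begin{equation*}
\frac{1}{2\pi}\int_0^\infty k\,dk\,\frac{J_0(k\sqrt{2\omega})}{k^2+m^2} = \frac{1}{2\pi}K_0(m\sqrt{2\omega})
\end{equation*}
replaces the $e^{-\sqrt{2m^2\omega}}/(2m)$ of the odd-dimensional case. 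The classical series expansion of $K_0$ near the origin, which contains both $\ln(m^2\sigma)$ and harmonic numbers, produces exactly the $\ln\omega$ prefactor and the $H_{n-k}$ coefficients that enter the definition (\ref{ot11}) of the $\Phi$-functions.

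Following the strategy of Theorem~\ref{th2}, I set $\omega=\sigma$, recognise the integrand as $\hat N(s)=N(s)-\mathbf 1$, and focus on its leading $\tau$-coefficient $G_1$ satisfying $(A+m^2)G_1=\mathbf 1$. The natural ansatz is
\begin{equation*}
\Delta^{1/2}\sum_{n\geq 0} a_n (-\partial_{m^2})^n \frac{K_0(m\sqrt{2\sigma})}{2\pi} = \frac{1}{4\pi}\sum_{n\geq 1}b_n\Phi_{1-n}^\sigma + \frac{1}{4\pi}\sum_{n\geq 1}c_n\Psi_{1-n}^\sigma,
\end{equation*}
modulo a piece that will be absorbed into $\Omega^\sigma(\tau)$. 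Applying $A+m^2$ and using $A\Phi_0^\sigma=4\pi\mathbf 1-\Psi_1^\sigma$ together with $A\Phi_{-k}^\sigma=k\Phi_{1-k}^\sigma-\Psi_{1-k}^\sigma$ for $k\geq 1$ (both follow from (\ref{gl3}) and (\ref{gl31}) after $H_k-H_{k-1}=1/k$), and $A\Psi_{1-n}^\sigma=(n-1)\Psi_{2-n}^\sigma$, and then matching coefficients of $\mathbf 1$, $\Phi_{1-n}^\sigma$ and $\Psi_{1-n}^\sigma$, I obtain $b_n=(-m^2)^{n-1}/\Gamma(n)$ and an inhomogeneous recurrence for $c_n$ driven by the $-\Psi_{1-k}^\sigma$ defect. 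The latter is solved by analytic continuation: shift the index by $\epsilon$, note that $b_n^{(\epsilon)}=(-m^2)^{n-1}/\Gamma(n+\epsilon)$ still solves the homogeneous part, and read off $c_n$ as its $\epsilon$-derivative at $0$; this is the structural origin of the $\partial_\epsilon$ in (\ref{rec2}).

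Finally, summing the $k$-series produced by the expansion of $e^{-s/(k^2+m^2)}-1$ reproduces $f_n(s)$ directly and, after the same $\epsilon$-differentiation, yields the ${}_1\mathrm F_1$ appearing in $g_{-n}(s)$. The positive-power-$\sigma$ contributions from $K_0(m\sqrt{2\sigma})$, which a priori do not match any negative-index $\Phi$- or $\Psi$-term on the right, are shown to assemble, via (\ref{K_wave}) and Lemma~\ref{lem2}, into exactly $e^{-m^2\tau}\Omega^\sigma(\tau)/(4\pi)^{d/2}$; this justifies the subtraction in the definition of $\widetilde K^\sigma(\tau)$, and the passage from $\omega=\sigma$ to general $\omega$ is a formal power-series check as in Theorem~\ref{th2}. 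The principal obstacle is this last identification, combined with the $\epsilon$-regularised solution of the coupled $c_n$-recurrence: unlike the clean first-order recurrences in the odd-dimensional proof, the $\Phi\to\Psi$ coupling forced by (\ref{gl3}) necessitates a logarithmic derivative in the normalisation, which is the structural reason for the $\partial_\epsilon$-derivative in (\ref{rec2}).
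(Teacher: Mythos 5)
Your proposal is correct and follows the paper's strategy in all essentials: reduction to $d=2$ via $-2\partial_\omega$, the two-dimensional Bessel representation with kernel $J_0(\rho\sqrt{2\omega})$, expansion of $e^{-s/(\rho^2+m^2)}-1$ in powers of $s$, the series for $\mathrm{K}_0(\sqrt{2m^2\omega})$ supplying the logarithm and harmonic numbers, an ansatz in $\Phi_{1-n}^\sigma$ and $\Psi_{1-n}^\sigma$ fixed by applying $A+m^2$, and the $\epsilon$-derivative device behind the ${}_1\mathrm{F}_1$ in $g_{-n}$. Two organizational choices differ, and in both the paper's version is cleaner. First, the paper does not transform the full $K(\tau)$ and then identify a leftover with $\Omega$ a posteriori: it introduces the integration operator $S(\omega)=-\tfrac12\int_0^\omega d\widetilde\omega$, for which $S^n(\omega)e^{-\omega/2\tau}=\tau^n\big(e^{-\omega/2\tau}-\sum_{k=0}^{n-1}\tfrac{1}{k!}(-\omega/2\tau)^k\big)$, so that $\widetilde K^\omega(\tau)$ is written exactly as $\frac{\Delta^{1/2}}{4\pi\tau}e^{-\tau m^2}\sum_n a_n S^n(\omega)e^{-\omega/2\tau}$; the subtraction of $\Omega$ is thus built into the left-hand side from the outset, and what you call the ``principal obstacle'' --- reassembling the unmatched positive-power terms into $e^{-m^2\tau}\Omega^\omega(\tau)/(4\pi)^{d/2}$ --- never arises (that identity is instead the separate and elementary Theorem \ref{th4}). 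Relatedly, the correct equation for the $k=1$ coefficient is $(A+m^2)G_1=\mathbf 1-\Psi_{d/2}^\sigma/(4\pi)^{d/2}$ rather than $\mathbf 1$; you state the latter, though your subsequent use of $A\Phi_{d/2-1}^\sigma=(4\pi)^{d/2}\mathbf 1-\Psi_{d/2}^\sigma$ shows you intend the former. Second, the paper's ansatz uses the combinations $\Phi_{1-n}^\sigma+H_{n-1}\Psi_{1-n}^\sigma$, which by (\ref{gl3}) are shifted into one another by $A$ with no $\Psi$-defect, so both recurrences come out homogeneous, $b_{n+1}=-m^2b_n/n$ and $c_{n+1}=-m^2c_n/n$; your plain-$\Phi$ basis yields the coupled, inhomogeneous $c$-recurrence that you then resolve by the $\epsilon$-shift. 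Your $c_n$ equals the paper's $b_nH_{n-1}+c_n$, so this is consistent --- a difference of bookkeeping, not of substance.
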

\noindent\textbf{Proof:}
First of all we simplify the problem. Namely, due to the presence of equalities (\ref{Psi_rec_1}) and (\ref{gl32}) we can investigate only two dimensional case. Indeed, the situations $d>2$ can be obtained by using differentiation by the parameter $\omega$. Hence, without loss of generality we work only with $d=2$.

Let us start from the right hand side of (\ref{tth}). Using formulae (\ref{K_0}) and (\ref{K_wave}), $\widetilde{K}^\omega(\tau)$ can be rewritten in the form
\begin{equation}
\widetilde{K}^\omega(\tau)=\frac{\Delta^{1/2}}{4\pi\tau}e^{-\tau m^2}\sum_{n=0}^{+\infty}\tau^n a_n\left(e^{-\omega/2\tau}-\sum_{k=0}^{n-1}\frac{1}{k!}\left(-\frac{\omega}{2\tau}\right)^k\right).
\end{equation}

Then, we introduce an integration operator $S(\omega)$ and its inverse one, such that $S^{-1}(\omega)S(\omega)=1$, by the formulae
\begin{equation}
S(\omega)= -\frac{1}{2}\left.\int_0^\omega \, d\widetilde{\omega}\right|_{\omega=\widetilde{\omega}},\,\,\,\,\,\,
S^{-1}(\omega)=-2\frac{d}{d\omega}.
\end{equation}

In this case the formula for $\widetilde{K}^\omega(\tau)$ has the following form
\begin{equation}\label{K_wave_2}
\widetilde{K}^\omega(\tau)=\frac{\Delta^{1/2}}{4\pi\tau}e^{-\tau m^2}\sum_{n=0}^{+\infty} a_n\left(S^{n}(\omega)e^{-\omega/2\tau}\right), 
\end{equation}
or, using relation (\ref{g19}) and two-dimensional Fourier transform for $(4\pi\tau)^{-1}\exp(-\omega/2\tau-m^2\tau)$, as it was done in the one-dimensional case earlier, we get
\begin{equation}\label{K_wave_3}
\widetilde{K}^\omega(\tau)=-\int_{\mathbb{R}_+} ds\,\sqrt{\frac{\tau}{s}}J_1(2\sqrt{\tau s})
\bigg[
\frac{\Delta^{1/2}}{4\pi}\sum_{n=0}^{+\infty} a_nS^{n}(\omega)\,2
\int_{\mathbb{R}_+} d\rho\,\rho J_0(\rho\sqrt{2\omega})\left(e^{-s/(\rho^2+m^2)}-1\right)\bigg],
\end{equation}
where we have used one additional equality
\begin{equation}
\label{int}
\frac{1}{\pi}\int_{\mathbb{R}_+}\int_0^{2\pi} d\rho d\phi\,\rho e^{i\rho\sqrt{2\omega}\cos(\phi)}\left(e^{-s/(\rho^2+m^2)}-1\right)=
2\int_{\mathbb{R}_+} d\rho\, \rho J_0(\rho\sqrt{2\omega})\left(e^{-s/(\rho^2+m^2)}-1\right).
\end{equation}

Now we need to transform the expression in the square brackets to the $\Psi$-($\Phi$-)functions representation. For this purpose we can use the series expansion for the exponential, because all integrals converge due to $m>0$. Let us use the following two relations
\begin{equation}
\label{exp}
e^{-s/(\rho^2+m^2)}-1=\sum_{k=1}^{+\infty} \frac{(-s)^k}{k!(k-1)!}\left(-\frac{\partial}{\partial m^2}\right)^{k-1}\frac{1}{\rho^2+m^2},
\end{equation}
and
\begin{equation}
\label{exp1}
\int_{\mathbb{R}_+} d\rho\,\frac{\rho J_0(\rho\sqrt{2\omega})}{\rho^2+m^2}=
\mathrm{K}_0(\sqrt{2m^2\omega})
=\sum_{j=0}^{+\infty}\frac{H_j-\gamma-\ln\sqrt{m^2\omega/2}}{j!j!}\left(\frac{m^2\omega}{2}\right)^j,
\end{equation}
where $\mathrm{K}_0$ is the modified Bessel function of second kind.
Thus, we need to find new representation for the expression
\begin{equation}\label{U}
\sum_{k=1}^{+\infty} \frac{(-s)^k}{k!(k-1)!}\left(-\frac{\partial}{\partial m^2}\right)^{k-1}\Bigg[\sum_{n=0}^{+\infty} \frac{a_nS^n(\omega)}{2\pi}\Bigg(\sum_{j=0}^{+\infty}\frac{H_j-\gamma-\ln\sqrt{m^2\omega/2}}{j!j!}\left(\frac{m^2\omega}{2}\right)^j\Bigg)\Bigg].
\end{equation}

Indeed, we need to work only with the coefficient $k=1$, because other can be obtained by differentiation by the parameter $m^2$. Here we must give arguments similar to those that were in Theorem \ref{th2}, so we give the link to the paragraph after formula (\ref{odd}) to avoid repetition. Taking into account the previous reasoning, analyzing the terms from the sum leads to the ansatz $\sum_{n=1}^{+\infty} b_n(\Phi_{1-n}^\sigma+H_{n-1}\Psi_{1-n}^\sigma)/(4\pi)+\sum_{n=1}^{+\infty} c_n\Psi_{1-n}^\sigma/(4\pi)$, that we should choose. Moreover, we can use the fact, that the coefficient for $k=1$ gives $4\pi\mathbf{1}-\Psi_{1}^\sigma$ after applying the operator $A+m^2$ to it. Hence, we obtain two sets of recurrence relations
\begin{equation}
b_{n+1}=-\frac{m^2}{n}b_n\,\,\,\mbox{and}\,\,\,
c_{n+1}=-\frac{m^2}{n}c_n\,\,\,\mbox{for}\,\,\,n\geqslant 1.
\end{equation}

The last equalities mean, that we need to find only initial coefficients $b_1$ and $c_1$. They follow from formula (\ref{Psi}), definition (\ref{log}), and the coefficients from (\ref{U}) corresponding to the term $n=j=0$. Hence, we get $b_1=1$ and $c_1=-2\gamma-\ln(m^2/2)$, from which the main formulae follow
\begin{equation}
b_n=\frac{(-m^2)^{n-1}}{(n-1)!}\,\,\,\mbox{and}\,\,\, c_n=\frac{(-m^2)^{n-1}}{(n-1)!}[-2\gamma-\ln(m^2/2)]
\,\,\,\mbox{for}\,\,\,n\geqslant 1.
\end{equation}

Substituting this into (\ref{U}) and differentiating by the parameter $m^2$, as it was done in the one-dimensional case, we obtain
\begin{equation}
\label{re}
\frac{1}{4\pi}
\sum_{k=1}^{+\infty}\frac{(-s)^k}{k!(k-1)!}\Bigg(\sum_{n=k}^{+\infty} \frac{(-m^2)^{n-k}}{\Gamma(n-k+1)}\big[\Phi_{1-n}^\sigma-\Psi_{1-n}^\sigma(2\gamma+\ln(m^2/2))\big]+\sum_{n=1}^{+\infty}\frac{(-m^2)^{n-k}H_{n-k}}{\Gamma(n-k+1)}\Psi_{1-n}^\sigma\Bigg),
\end{equation}
where we have used the following relations
\begin{equation}
\frac{\partial_\epsilon\Gamma(n+\epsilon+1)}{\Gamma(n+\epsilon+1)}=H_{n+\epsilon}-\gamma,
\end{equation}
\begin{equation}
\partial_t^k t^n\ln(t)=\partial_t^k\partial_\epsilon^{\phantom{k}}\big|_{\epsilon=0} t^{n+\epsilon}=
\partial_\epsilon^{\phantom{k}}\big|_{\epsilon=0} \frac{t^{n-k+\epsilon}\Gamma(n+\epsilon+1)}{\Gamma(n-k+\epsilon+1)}=
\frac{t^{n}\Gamma(n+1)}{\Gamma(n-k+1)}(\ln(x)+H_n-H_{n-k}).
\end{equation}

Then, changing the summation order and using the following formulae
\begin{equation}
\frac{H_{n-k}}{\Gamma(n-k+1)}=-\frac{\partial}{\partial\epsilon}\bigg|_{\epsilon=0}
\frac{e^{-\gamma\epsilon}}{\Gamma(n-k+1+\epsilon)},
\end{equation}
\begin{equation}
\sum_{n=1}^{+\infty}\frac{t^kH_{n-k}}{k!(k-1)!\Gamma(n-k+1)}=
-\frac{\partial}{\partial\epsilon}\bigg|_{\epsilon=0}\frac{te^{-\gamma\epsilon}}{\Gamma(n+\epsilon)}{}_1\mathrm{F}_1(1-n-\epsilon,2;-t),
\end{equation}
we obtain the theorem statement. Let us note it again, that if we put $\sigma$ equal to $\omega$ and expand (\ref{re}) in a series, we will get exactly (\ref{U}).
$\square$

\begin{lemma}
	\label{th4}
	Let $\omega\in\mathbb{R}_+$, $\Omega^\omega(\tau)=\sum_{n=0}^{+\infty} \tau^n \Psi_{d/2+n}^\omega$ be from Lemma \ref{lem2}, and $\tau$ be small enough. Then, under the general conditions of Section \ref{sec:pro}, we have
	\begin{equation}
	\label{tth2}
	e^{-\tau m^2}\Omega^\omega(\tau)=-\int_{\mathbb{R}_+} ds\, \sqrt{\frac{\tau}{s}}J_1(2\sqrt{\tau s}) \left(\sum_{n=0}^{+\infty} g_n(s)\Psi_{d/2+n}^\omega\right),
	\end{equation}
	where
	\begin{equation}
	g_n(s)=(-\partial_{m^2})^{n}\big[e^{-s/m^2}-1\big].
	\end{equation}
\end{lemma}
\noindent\textbf{Proof:}
The statement of the lemma follows from the explicit expression for $\Omega(\tau)$, relation \begin{equation}
\tau^n\exp(-\tau m^2)=(-\partial_{m^2})^{n}\exp(-\tau m^2)\,\,\,\mbox{for}\,\,\,n\in\mathbb{N},
\end{equation}and formula (\ref{g19}) for $\lambda=m^2$.
$\square$

\begin{corollary}
	\label{th5}
	Let $K(\tau)$ be the heat kernel expansion (\ref{K_0}) on the even-dimensional manifold for small enough values of $\tau$, $\omega\in\mathbb{R}_+$, and $\Phi$-functions and $\Psi$-functions be from Definitions \ref{psi1} and \ref{log2}. Let also the functions $\{f_n(s),g_{-n}(s)\}_{n=1}^{+\infty}$ be from Theorem \ref{th3}, and $\{g_{n}(s)\}_{n=0}^{+\infty}$ be from Lemma \ref{th4}. Then, under the general conditions of Section \ref{sec:pro}, we have
	\begin{equation}
	\label{tth1}
	K(\tau)e^{-\frac{\omega-\sigma}{2\tau}}=-\int_{\mathbb{R}_+} ds\, \sqrt{\frac{\tau}{s}}J_1(2\sqrt{\tau s})\Bigg(\sum_{n=1}^{+\infty}\frac{f_n(s)}{(4\pi)^{d/2}}\Phi_{d/2-n}^\omega+\sum_{n\in\mathbb{Z}} \frac{g_{n}(s)}{(4\pi)^{d/2}}\Psi_{d/2+n}^\omega\Bigg).
	\end{equation}
\end{corollary}
\noindent\textbf{Proof:}
The statement of the corollary follows from the sum of formulae (\ref{tth}) and (\ref{tth2}).
$\square$

\subsection{Examples}
\label{ex2}
Here we continue to consider examples for $\omega=\sigma$. In the previous section we have derived the Hankel transform of heat kernel expansion (\ref{K_0}) in the even-dimensional case, see Theorem \ref{th3}, Lemma \ref{th4}, and Corollary \ref{th5}. An important detail in the proof was the positive mass $m>0$. Thanks to this fact we were able to use the series expansion (\ref{exp}), because integral (\ref{exp1}) converges. 

However, the transition $m\to+0$ to the massless case is not an easy task, because the transform of $\Omega(\tau)$, see Lemma \ref{th4}, contains the exponential $\exp(-s/m^2)$. Hence, permutation of the limit $m\to+0$ and integration by $s$ is not quite correct. Therefore, to study the limiting case we need to either use different decomposition or consider a special type restriction. Let us choose the second way and investigate the case $a_n=\delta_{n0}$ and $\Delta^{1/2}=1$. Moreover, firstly we study the simplest situation $d=2$.

The condition $a_n=\delta_{n0}$ means, that $\Omega(\tau)$ is equal to zero. So, the functions $\{g_n(s)\}_{n\geqslant 0}$ are zero. Then we can find the asymptotic behavior for other functions
\begin{equation}
\lim_{m\to+0}f_{n}(s)=\frac{(-s)^n}{n!(n-1)!},\,\,\,
\lim_{m\to+0}g_{-n}(s)=\frac{(-s)^n}{n!(n-1)!}(\ln(2)-\ln(s)-3\gamma+H_n+H_{n-1}),
\end{equation}
for $n\geqslant 1$, and where we have used (\ref{rec1}), (\ref{rec2}), and the relation
\begin{equation}
{}_1\mathrm{F}_1(1-n-\epsilon,2,-s/m^2)=\frac{(s/m^2)^{n-1+\epsilon}}{\Gamma(n+1+\epsilon)},\,\,\,\mbox{for}\,\,\,m\to+0.
\end{equation}

Then, using the summation with $\Phi$- and $\Psi$- functions, we get
\begin{equation}
\sum_{n=1}^{+\infty}\frac{f_n(s)}{4\pi}\Phi_{1-n}^\sigma\bigg|_{a_n=\delta_{n0},\,m=0}=\frac{1}{4\pi}\sum_{n=1}^{+\infty} \frac{s^n(\ln\sigma-H_{n-1})}{(n-1)!(n-1)!n!}(\sigma/2)^{n-1}, 
\end{equation}
\begin{equation}
\sum_{n=1}^{+\infty} \frac{g_{-n}(s)}{4\pi}\Psi_{1-n}^\sigma\bigg|_{a_n=\delta_{n0},\,m=0}=\frac{1}{4\pi}\sum_{n=1}^{+\infty} \frac{s^n(\ln(s)-\ln(2)+3\gamma-H_n-H_{n-1})}{(n-1)!(n-1)!n!}(\sigma/2)^{n-1}.
\end{equation}

Let us note that the last sums converge, but they have bad behaviour at infinity, because they  increase exponentially. However, their sum $Q(2,s,\sigma)$ is a good function, because it goes to zero at infinity. The same situation has been studied in the odd-dimensional case. This is a rather remarkable property of series that the logarithmic growth of $\ln(s)$ can be compensated by introducing a harmonic numbers. Also we should note, that the last result in the restricted case can be obtained by explicit calculation of (\ref{int}), see Appendix A.

The case for $d>2$ can be obtained by differentiating $-\partial_\sigma/2\pi$ with respect to the parameter $d/2-1$ times. The result function $Q(d,s,\sigma)$, when $d$ is even, is equal to 
\begin{align}
Q(d,s,\sigma)&=(-\partial_\sigma/2\pi)^{d/2-1}Q(2,s,\sigma)\\&=
\frac{(-1)^{d/2-1}}{(4\pi)^{d/2}}\sum_{n=1}^{+\infty} \frac{s^n[\ln(s\sigma/2) -H_{n-1}-H_{n}-H_{n-d/2} +3\gamma]}{(n-1)!n!\Gamma(n-d/2+1)}(\sigma/2)^{n-d/2}
\end{align}
and depicted in Figure \ref{Fig3} for different values of the parameter $d$. The last formula complements the definition for odd-dimensional case (\ref{d}). Especially, we note that the decreasing of $Q(2,s,\sigma)$ for $s\to+\infty$ can be shown explicitly integrating (\ref{int}) by parts.

\section{Applications}
\label{apl}

\paragraph{Green's function expansion.} We can extract several useful corollaries from Theorem \ref{th2} and Corollary \ref{th5}. Indeed, in both statements, on the right side of the equality, we have the Hankel transform of a function that depends on the variable $s$. This function also allows a Taylor series expansion in powers of the parameter $s$. At the same time, according to the reasoning from Section \ref{sec:pro}, the coefficient near the first degree of $s$ is a Green's function. So, for $x\sim y$ we get the following formal decompositions

	\begin{equation}\label{greenodd}
	G=\frac{1}{(4\pi)^{d/2}}
	\Bigg(
	\sum_{n=1}^{+\infty}\frac{(-m^2)^{n-1}}{\Gamma(n)}\Psi_{d/2-n}^\sigma+
	\sum_{n\in\mathbb{Z}}\frac{\pi(-1)^{n}(m^2)^{n-1/2}}{\Gamma(n+1/2)}\Psi_{(d-1)/2-n}^\sigma\Bigg),
	\end{equation}
	when $d$ is odd, and
	\begin{equation}\label{greeneven}
	G=\frac{1}{(4\pi)^{d/2}}\Bigg(
	\sum_{n=1}^{+\infty} \frac{(-m^2)^{n-1}}{\Gamma(n)}\big[\Phi_{d/2-n}^\sigma-\Psi_{d/2-n}^\sigma(2\gamma+\ln(m^2/2)-H_{n-1})\big]+
	\sum_{n=1}^{+\infty}\frac{\Gamma(n)}{m^{2n}}\Psi_{d/2-1+n}^\sigma\Bigg),
	\end{equation}
	when $d$ is even.
To obtain them it is enough to use the inverse Hankel transform in equality (\ref{gl9}), apply the Taylor expansion to the left hand side and use the results from Theorem \ref{th2} and Corollary \ref{th5}.

Note that the first few terms of the last series in the four-dimensional case were written out in Ref. \cite{10}. Let us draw the attention that the last expansions should be considered as  the formal ones in powers of $x-y$. Convergence issues or asymptotic behaviour should be discussed separately. They depend on the smoothness of the Laplace operator coefficients and the mass value. Some arguments on this subject can be found in Refs. \cite{5,10,102}.

\paragraph{Cutoff regularization.} The representations for $\Psi$- and $\Phi$-functions obtained above, see formula (\ref{def2}) and definition (\ref{log}), can be useful in different investigations with a cutoff regularization. For example, in loop calculations, the main building block for diagrams is a Green's function. As a rule, it is represented as an integral of $K(x,y;\tau)$ from $0$ to $+\infty$ by the variable $\tau$. If we want to explore the region $x\sim y$ , we will face a problem, because the integral at $x=y$ diverges near zero. This means that we should use a regularization.

One of the convenient types of regularization is the cutoff one. It can be achieve by introducing a special parameter $\Lambda$ by the following substitution:
\begin{equation}
\sigma \to \sigma_{\Lambda}=
\begin{cases}
\,\,\,\,\sigma, &\text{for}\,\,\sigma\geqslant 1/2\Lambda^2;\\
1/2\Lambda^2, &\text{for}\,\,\sigma<1/2\Lambda^2.
\end{cases}
\end{equation}

After such procedure in the region $\sigma<1/2\Lambda^2$ we have the exponential $\exp(-1/4\tau\Lambda^2)$ instead of $\exp(-\sigma/2\tau)$. Hence, the integral converges for $x=y$ near $\tau=0$. Moreover, it can be verified, that the regularized Green's function goes to $G(x,y)$ when $\Lambda\to+\infty$ in the sense of generalized functions.

Let us give some examples of regularized fundamental solutions, described above:
\begin{equation}
G \to G_{\Lambda}=G\big|_{\sigma \to \sigma_{\Lambda}},
\end{equation}
where the substitution affects only the parameter $\sigma$ as an independent one. We emphasize that the regularization does not change dependence of the Seeley--DeWitt coefficients on the variables $x$ and $y$. This approach has been used successfully in the renormalization procedure of the four-dimensional Yang--Mills theory \cite{16} and the scalar $\phi^3$-model \cite{113}.

\paragraph{Integral calculations.}
We have seen above that sometimes we need to compute integrals by variable $\tau$. For example, when finding the Green's function. However, this is not the only example. Similar integrals arise in quantum field theory and in the theory of anomalies. In this regard, it makes sense to study constructions of the following form
\begin{equation}
\label{int1}
\int_{\mathbb{R}_+}d\tau\,\tau^{-k/2}K(x,y;\tau).
\end{equation}
where it is implied that the integral exists on the upper limit.

Let us consider the following procedure. We introduce an auxiliary manifold, that does not violate the general assumptions of the problem and locally can be represented by Euclidean coordinates $\hat{x}$ and $\hat{y}$ from a smooth convex domain  $\hat{U}\subset\mathbb{R}^k$. For example, product of $k$ circles. Then we define the Synge’s world function as $\hat{\sigma}(\hat{x},\hat{y})=|\hat{x}-\hat{y}|^2/2$. Hence, the local part of a heat kernel $\hat{K}(\hat{x},\hat{y};\tau)$ for the ordinary Laplace operator has the following form $(4\pi\tau)^{-k/2}\exp(-\hat{\sigma}/2\tau)$. Further we note that for $x\sim y$
\begin{equation}
\nonumber
\int_{\mathbb{R}_+}d\tau\,\tau^{-k/2}K(x,y;\tau)=
\lim_{\hat{y}\to\hat{x}}\int_{\mathbb{R}_+}\frac{d\tau}{(4\pi)^{-\frac{k}{2}}}\hat{K}(\hat{x},\hat{y};\tau)K(x,y;\tau)=
(4\pi)^{k/2}\lim_{\hat{y}\to\hat{x}}G((x,\hat{x}),(y,\hat{y})),
\end{equation}
where $G$ is the Green's function for $(d+k)$-dimensional case, in which the Seeley--DeWitt coefficients, the Van-Vleck--Morette determinant, and the Synge's world function are from the Green's function for $d$-dimensional case.

The latter formula means that we can study integrals (\ref{int1}) by using the transition into a space of higher dimension. In the last formula we have used $d\to d+k$. This trick, with further use of the representations for $\Psi$-($\Phi$-)functions, allows us to find divergences and simplify calculations, see the section with $\ln\det(A)$ from Ref. \cite{16}.

\paragraph{Integral transforms of the heat kernel.} Actually, the result of Theorem \ref{th2}, Corollary \ref{th5}, and Lemma \ref{l1} allows us to find some integral transformations for the local part of the heat kernel. Let us assume, that an integral transformation is on $\mathbb{R}_+$, and that a kernel $C(\tau)$ of the integral transform has a convergent Taylor decomposition $C(\tau)=\sum_{n=0}^{+\infty}c_n\tau^n$. Let us also remind the fact that the heat kernel $K(\tau)$ and the Green's function $G$ from (\ref{greenodd}) and (\ref{greeneven}) depend on the mass parameter $m$. So we can write the following chain of relations
\begin{equation}
\int_{\mathbb{R}_+}d\tau\,C(\tau)K(\tau)=
\sum_{n=0}^{+\infty}c_n\int_{\mathbb{R}_+}d\tau\,\tau^nK(\tau)=
\sum_{n=0}^{+\infty}c_n(-\partial_{m^2})^nG,
\end{equation}
where we changed the order of the sum and the integral under the assumption that we have convergence. Otherwise, we assume that we are working with asymptotic series. Let us also note that if the integral transform contains negative powers of the variable $\tau$, then we can use the derivatives with respect to the parameter $\sigma$, considering it as an independent one.

\section{Conclusion}
In our paper we have defined two families of new functions ($\Psi$ and $\Phi$), and have used them to find the Hankel transform of the local part of the heat kernel (\ref{K_0}). We believe that such calculations are very useful and important because they lead to a set of non-trivial relations, that allow us to study the transformations of the heat kernel. For example, the usual integration leads to the Green's function. Other applications are listed in Section \ref{apl}.

Separately, we pay attention to the fact that we perform all calculations locally, when $x,y\in U\subset M$. Moreover, in some cases we assume that $x\sim y$. This is necessary in order to be able to uniquely construct the Synge's function $\sigma(x,y)$ and the Seeley--DeWitt coefficients $a_n (x,y)$. This raises interesting questions. In what cases is it possible to abandon locality and conduct reasoning globally? 

Another problem can be related to investigation of  products of the functions. Let $x,y\in U$. Consider the following integral
\begin{equation}
h_{p,n}=\int_{U}d^dz\,\Psi_{k}^\sigma(x,z)\sqrt{g(z)}\Psi_{n}^\sigma(z,y).
\end{equation}

Of course we have no ability to calculate the integral explicitly, but we can give some discussions for special cases. Let $d$ be odd and $p=d/2-k$, where $k\in\mathbb{N}$, then if we apply the operator $A$ several times we get the delta-function under the integral. This means that $A^kh_{p,n}=\Psi_{n}^\sigma$. Another example is the following. Let $d$ be even and $p=d/2-k$, where $k\in\mathbb{N}$, then we obtain $A^kh_{p,n}=0$. The same reasoning are possible for $\Phi$-functions.

Here we make some obvious remarks on the spectrum of the operator $A+m^2$. In Section \ref{sec:pro}, we assumed that such an operator has exclusively positive eigenvalues. If we have a non-positive value of the spectral parameter, then the integral along $\mathbb{R}_+$ will diverge. Therefore, we must add the projector on the positive component of the spectrum in all our calculations.

If we assume that the manifold $M$ has a boundary, then the heat kernel must satisfy boundary conditions. Such an assumption changes the asymptotic expansion (\ref{K_0}) and it should have additional correction terms. This means that the case of manifolds with a boundary must be considered separately and is not within the scope of this paper.

Let us make some comment on dimension of physical quantities. In this paper, we have used such formulae as $\ln(m^2)$ or $\ln(\sigma)$. We meant that all the values are dimensionless. If we consider the case when $[\sigma]=L^2$ and $[m]=L^{-1}$, where $L$ has the dimension of length, this is not a problem, because all calculations contain logarithm sums. Hence, we have only dimensionless combinations of the quantities.

\paragraph{Acknowledgments.}
	This research is fully supported by the grant in the form of subsidies from the Federal budget for state support of creation and development world-class research centers, including international mathematical centers and world-class research centers that perform research and development on the priorities of scientific and technological development. The agreement is between MES and PDMI RAS from \textquotedblleft8\textquotedblright\,November 2019 № 075-15-2019-1620. 
	
	Authors express gratitude to D.V. Vassilevich for reading of the manuscript and suggestion of amendments.
	Also, A.V. Ivanov is a winner of the Young Russian Mathematician award and would like to thank its sponsors and jury.

\section{Appendix A}
\begin{lemma}
	The integral (\ref{int}) from Theorem \ref{th3} has the following series expansion near zero for $m=0$
	\begin{align}\label{int_t}
	T(\omega,s)&= 2\int_{\mathbb{R}_+} d\rho\,\rho J_0(\rho\sqrt{2\omega})\left(e^{-s/\rho^2}-1\right)\\&\nonumber=
	\sum_{k=1}^{+\infty} \frac{s^k[\ln(s\omega/2) -2H_{k-1}-H_{k} +3\gamma]}{(k-1)!(k-1)!k!}(\omega/2)^{k-1}.
	\end{align}
\end{lemma}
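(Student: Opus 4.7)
The plan is to evaluate $T(\omega,s)$ through its Mellin transform with respect to $s$. The integral defining $T$ converges absolutely for every $s>0$, so for $\sigma$ in a suitable vertical strip the two integrations can be interchanged to obtain a closed-form expression for $\widetilde T(\omega,\sigma)=\int_0^\infty s^{\sigma-1}T(\omega,s)\,ds$. Mellin inversion by closing the contour to the left will then reproduce the small-$s$ expansion, and the $s^k\ln s$ structure of the claimed series is precisely what one expects from the double poles that $\widetilde T$ will possess.

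For the first step, the standard identity
\[
\int_{\mathbb{R}_+}s^{\sigma-1}\bigl(e^{-s/\rho^2}-1\bigr)\,ds=\rho^{2\sigma}\,\Gamma(\sigma),\qquad -1<\mathrm{Re}\,\sigma<0,
\]
combined with the classical Weber--Schafheitlin integral $\int_0^\infty\rho^{2\sigma+1}J_0(\rho a)\,d\rho=2^{2\sigma+1}\Gamma(\sigma+1)/[a^{2\sigma+2}\Gamma(-\sigma)]$ at $a=\sqrt{2\omega}$, is valid in the common strip $-1<\mathrm{Re}\,\sigma<-1/4$ and gives
\[
\widetilde T(\omega,\sigma)=\frac{2^{\sigma+1}\,\sigma\,\Gamma(\sigma)^{2}}{\omega^{\sigma+1}\,\Gamma(-\sigma)}.
\]
The right-hand side extends meromorphically to all of $\mathbb{C}$. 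I would next check the pole inventory: $\widetilde T$ is regular at $\sigma=0$ (the simple pole of $\Gamma(\sigma)$ inside $\sigma\,\Gamma(\sigma)^2$ is killed by the simple zero of $1/\Gamma(-\sigma)$) and has a double pole at $\sigma=-k$ for every $k\geqslant 1$, produced by the two $\Gamma(\sigma)$ factors.

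For the second step, closing the contour to the left is legitimate for small $s$ because $s^{-\sigma}$ and $\Gamma(\sigma)$ jointly decay on leftward arcs; by the residue theorem $T(\omega,s)=\sum_{k\geqslant 1}\mathrm{Res}_{\sigma=-k}[s^{-\sigma}\widetilde T(\omega,\sigma)]$. Writing $\sigma=-k+\epsilon$ and using the Laurent expansion $\Gamma(-n+\epsilon)=\frac{(-1)^n}{n!}\bigl[\epsilon^{-1}+\psi(n+1)+O(\epsilon)\bigr]$ with $\psi(n+1)=-\gamma+H_n$, together with $s^{-\sigma}=s^{k}(1-\epsilon\ln s+\cdots)$, $2^{\sigma+1}=2^{1-k}(1+\epsilon\ln 2+\cdots)$, $\omega^{-\sigma-1}=\omega^{k-1}(1-\epsilon\ln\omega+\cdots)$, and the identity $2^{1-k}\omega^{k-1}=(\omega/2)^{k-1}$, the coefficient of $\epsilon^{-1}$ (that is, the residue) collapses to
\[
\frac{s^{k}\,(\omega/2)^{k-1}}{((k-1)!)^{2}\,k!}\bigl[\ln s+\ln\omega+3\gamma-\ln 2-2H_{k-1}-H_k\bigr],
\]
which is the $k$-th summand on the right of (\ref{int_t}).

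The main technical obstacle is the bookkeeping at the double poles. The three logarithms $\ln s,\ln\omega,\ln 2$ must drop out of the linear-in-$\epsilon$ part of the three power prefactors, while the combination $-3\gamma+2H_{k-1}+H_k$ has to emerge from $2\psi(k+1)+\psi(k)$ after using $\psi(k+1)=\psi(k)+1/k$ to convert digamma values into harmonic numbers; the identity $c_{-2}\cdot(\ln 2-\ln s-\ln\omega)+c_{-1}$ for the two leading Laurent coefficients $c_{-2},c_{-1}$ of $\sigma\,\Gamma(\sigma)^2/\Gamma(-\sigma)$ is what actually does the job. A secondary concern is the justification of closing the contour and summing residues termwise; these rest on Stirling-type estimates for $\Gamma$ on vertical lines together with the factorial decay of the residue sequence, which guarantees convergence for sufficiently small $s$.
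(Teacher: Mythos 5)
Your proposal is correct: I checked the Mellin transform $\widetilde T(\omega,\sigma)=2^{\sigma+1}\sigma\Gamma(\sigma)^2/[\omega^{\sigma+1}\Gamma(-\sigma)]$ against the Cauchy--Saalsch\"utz and Weber--Schafheitlin identities, and the residue at the double pole $\sigma=-k$ does reduce (via $\tfrac1k-2\psi(k+1)-\psi(k)=3\gamma-2H_{k-1}-H_k$) to the $k$-th summand of (\ref{int_t}), including the overall $1/[((k-1)!)^2k!]$ after the prefactor $(-k+\epsilon)$ is absorbed. This is a genuinely different route from the paper's proof. The paper first shows that $T$ satisfies the differential relation $2\partial_\omega s\partial_s^2T=T$, feeds in the ansatz $T=g_0+\sum_k\left(s^k\ln(s)f_k(\omega)+s^kg_k(\omega)\right)$, solves the resulting recurrences in $\omega$, and then pins down the integration constants $f_k(0)$, $g_k(0)$ and the delicate coefficient $g_1(\omega)=\ln\omega+3\gamma-\ln 2-1$ by a direct asymptotic analysis of the integral, splitting $\mathbb{R}_+$ into $[0,1]$ and $[1,+\infty)$. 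That argument is elementary but pushes all the difficulty into the boundary data; your Mellin approach produces the entire expansion, logarithms included, in one pass from the pole structure, at the cost of the analytic bookkeeping you flag (Fubini, contour closing, termwise residue summation). One small caveat worth recording if you write this up: the double integral is absolutely convergent only for $-1<\mathrm{Re}\,\sigma<-3/4$ (for $\mathrm{Re}\,\sigma\geqslant-3/4$ the $\rho$-integral is merely conditionally convergent), so you should establish the closed form of $\widetilde T$ in that sub-strip and extend it to $-1<\mathrm{Re}\,\sigma<-1/4$ by analytic continuation rather than asserting validity on the full common strip directly; this does not affect the inversion, since the inversion contour can be placed anywhere in $(-1,-1/4)$.
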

\noindent\textbf{Proof:}
First of all we show that the integral satisfies the following differential equation 
$2\partial_\omega^{\phantom{2}} s\partial_s^2T(\omega,s)=T(\omega,s)$. This can be achieved by explicit differentiation, integration by parts, and using the properties of the Bessel functions
\begin{equation}
\frac{d}{dx}J_0(x)=-J_1(x),\,\,\,\,\,\,\frac{1}{x}J_1(x)+\frac{d}{dx}J_1(x)=J_0(x).
\end{equation}

According to the above mentioned, let us take an ansatz for (\ref{int_t}) in the form
\begin{equation}\label{T_anz}
T(\omega,s)= g_0(\omega)+\sum_{k=1}^\infty \left(s^k\ln(s)f_k(\omega)+s^k g_k(\omega)\right),
\end{equation}
where the coefficients $ f_k(\omega)$ and $g_k(\omega)$ should be found. 
Let us apply the operator $2\partial_\omega^{\phantom{2}} s\partial_s^2$  to (\ref{T_anz}). Then we get the recurrent relations
\begin{equation}
\label{eq}
\begin{cases}
2\partial_\omega f_1(\omega)=g_0(\omega);
\\
2k(k+1)\partial_\omega f_{k+1}(\omega)=f_k(\omega);
\\
2(2k+1)\partial_\omega f_{k+1}(\omega)-2k(k+1)\partial_\omega g_{k+1}(\omega)=g_k(\omega).
\end{cases}
\end{equation}
where $k\geqslant 1$.

To  solve them, we have to find the initial conditions, using the integral representation (\ref{int_t}). One can note that $T(\omega,0)=0$, so $g_0(\omega)=0$. It is evident, that we can explicitly integrate the equations from (\ref{eq}). However, we have such arbitrariness as integration constants $\{f_k(0)\}_{k\geqslant1}$ and $\{g_k(0)\}_{k\geqslant2}$. Let us find them, using the asymptotic behavior of the construction (\ref{int}) for quite small values of $\omega$. For those purposes we cut $\mathbb{R}_+$ in two intervals $[0;1]$ and $[1;+\infty)$. So we get
\begin{align}
2\int_0^1d\rho\, \rho J_0(\rho\sqrt{2\omega})\left(e^{-s/\rho^2}-1\right)&=
\int_0^1 d\rho\, \left(e^{-s/\rho}-1\right)+o(1)\\&=s\ln(s)+s\omega-s-\sum_{k=2}^{+\infty} \frac{(-s)^k}{k!(k-1)}+o(1),
\end{align}
and
\begin{equation}
2\int_1^{+\infty} d\rho\,\rho J_0(\rho\sqrt{2\omega})\left(e^{-s/\rho^2}-1\pm\frac{s}{\rho^2}\right)
=\sum_{k=2}^{+\infty} \frac{(-s)^k}{k!(k-1)}+s\ln(\omega)+s\left(2\gamma-\ln(2)\right)+o(1).
\end{equation}

Using the last calculations and the form of ansatz (\ref{T_anz}), we get
\begin{equation}
\sum_{k=1}^{+\infty} s^k\ln(s) f_k(0)+\sum_{k=2}^{+\infty}s^kg_k(0)=s \ln(s).
\end{equation}

Therefore, $f_1(0)=1$ and $f_k(0)=g_k(0)=0$ for $k\geqslant 2$. Finally, we need to find the coefficient $g_1(\omega)$, that corresponds to $s$. This can be achieved by subtracting the logarithmic part $s\ln s$ and differentiating by the parameter $s$ with a further transition $s\to0$. So we get
\begin{align}
\nonumber
\partial_s\big|_{s=0}\bigg[2\int_{\mathbb{R}_+} d\rho\, \rho J_0(\rho\sqrt{2\omega})\left(e^{-s/\rho^2}-1\right)-s\ln(s)\bigg]=&-2\int_1^{+\infty} \frac{d\rho}{\rho}J_0(\rho\sqrt{2\omega})\\&-2\int_0^1\frac{d\rho}{\rho}\left(J_0(\rho\sqrt{2\omega})-1\right)+\gamma-1\\=&\ln(\omega)+3\gamma-\ln 2-1.
\end{align}
where in the second equality we have used the change of variable $\rho\to\rho/\sqrt{2\omega}$.
This means that $g_1(\omega)=\ln(\omega)+3\gamma-\ln(2)-1$.
Solving the recurrent relations, we find
\begin{equation}
f_k(\omega)=\frac{(\omega/2)^{k-1}}{k!(k-1)!(k-1)!},\,\,\,
g_k(\omega)=\frac{\ln(\omega)-2H_{k-1}-H_k+3\gamma-\ln 2}{k!(k-1)!(k-1)!}(\omega/2)^{k-1},
\end{equation}
which leads to the statement of the lemma.
$\square$

\end{document}